\documentclass[manyauthors]{fundam}
\usepackage{graphicx}
\usepackage{float}
\restylefloat{table}
\restylefloat{figure}
\usepackage{algorithm}
\usepackage{algpseudocode}
\usepackage{array}
\usepackage{multirow}
\usepackage{balance}
\usepackage{xcolor}
\usepackage{epstopdf}
\usepackage{caption}
\usepackage{subcaption}

\newtheorem{observation}{Observation}[section]
\newcommand{\tabincell}[2]{\begin{tabular}{@{}#1@{}}#2\end{tabular}}

\begin{document}

\setcounter{page}{1}
\publyear{2021}
\papernumber{0001}
\volume{178}
\issue{1}
\title{Efficient Reporting of Top-$k$ Subset Sums}

\author{Biswajit Sanyal\\Department of Information Technology\\Govt. College of Engg. \& Textile Technology\\Serampore, Hooghly, West Bengal 712 201, India\\
biswajit\_sanyal@yahoo.co.in \and Subhashis Majumder\corresponding\\Department of Computer Science and Engineering\\Heritage Institute of Technology, Kolkata, West Bengal 700 107, India\\ 
subhashis.majumder@heritageit.edu \and Priya Ranjan Sinha Mahapatra\\Department of Computer Science \& Engineering\\University of Kalyani, West Bengal, India\\
priya@klyuniv.ac.in}

\maketitle
\address{Subhashis Majumder, Department of Computer Science and Engineering, Heritage Institute of Technology, Kolkata, West Bengal 700 107, India, subhashis.majumder@heritageit.edu}
\runninghead{B. Sanyal, S. Majumder, P. R. Sinha Mahapatra}{Efficient Reporting of Top-$k$ Subset Sums}

\begin{abstract}

The ``Subset Sum problem" is a very well-known NP-complete problem. In this work, a top-$k$ variation of the ``Subset Sum problem" is considered. This problem has wide application in recommendation systems, where instead of $k$ best objects the $k$ best subsets of objects with the lowest (or highest) overall scores are required. Given an input set $R$ of $n$ real numbers and a positive integer $k$, our target is to generate the $k$ best subsets of $R$ such that the sum of their elements is minimized. Our solution methodology is based on constructing a {\em metadata structure} $G$ for a given $n$. Each node of $G$ stores a bit vector of size $n$ from which a subset of $R$ can be retrieved. Here it is shown that the construction of the whole graph $G$ is not needed. To answer a query, only implicit traversal of the required portion of $G$ on demand is sufficient, which obviously gets rid of the preprocessing step, thereby reducing the overall time and space requirement. A modified algorithm is then proposed to generate each subset incrementally, where it is shown that it is possible to do away with the explicit storage of the bit vector. This not only improves the space  requirement but also improves the asymptotic time complexity. Finally, a variation of our algorithm that reports only the top-$k$ subset sums has been compared with an existing algorithm, which shows that our algorithm performs better both in terms of time and space requirement by a constant factor.

%\keywords{Combination, Metadata Structure, DAG, Partition Tree, Aggregation Function}
\end{abstract}
\begin{keywords}
One Shift, Incremental One shift, DAG, Top-$k$ Query, Aggregation Function
\end{keywords}
%\end{frontmatter}

\section{Introduction}
In many application domains, retrieval of the most relevant data items benefits the end users much more than reporting a (potentially huge) list of all the data items that satisfy a certain query. Here the application of aggregation functions to the query results plays a very important role. One of the simplest functions is the top-$k$ aggregation, which reports the $k$ independent objects with the highest scores.

However, instead of a list of $k$ best independent objects, many applications in recommendation systems require $k$ best subsets of objects with lowest (or highest) overall scores. For example, consider an online shopping site with an inventory of items. Obviously, each item has its own cost. Suppose a buyer wants to buy multiple items from the site but he has his own budget constraint. Then recommending a list of $k$ best subsets of items with the lowest overall costs will be helpful to the buyers wherefrom they can pick the subset of items, they require most. In this paper, the above problem is modeled as the top-$k$ subset sums problem that generates the $k$ best subsets of items from an input inventory of items $I$, where a subset with a lower sum of costs occupies a higher position in the top-$k$ list. 

Let us consider another example of trip selection where a visitor wants to visit different places in a continent. As it is known that a continent has several places to visit and obviously each visit has a cost involvement also. So if the visitor has a budget constraint then obviously he can't cover all the places. In that case also, recommending a list of $k$ best subsets of places with lowest overall costs will be helpful to the visitor where from he can pick the subset of places, he admires most. This problem can also be modeled as a top-$k$ subset sums problem that reports $k$ best subsets of places with lowest overall costs.

\subsection{Problem Formulation}
Given a finite set $R$ of $n$ real numbers, $\{ r_1, r_2, \ldots, r_n\}$, sorted in non-decreasing order, our goal is to generate the $k$ best subsets (top-$k$ Subset Sums) for any input value $k$, ranked on the basis of summation function $F$, such that $F(S) = \sum_{r \in S} r$, for any subset $S\subseteq R$. Clearly $\mid S \mid$ $\in [1 \ldots n]$. In our problem, a subset $S_i$ is ranked higher than a subset $S_j$ if $F(S_i) < F(S_j)$. Furthermore, it is assumed that the rank is unique, so that when $F(S_i) = F(S_j)$, ties are broken arbitrarily. Note that if the input set of numbers does not come as sorted, an additional $O(n \log_2 n)$ time can be taken to sort it first. However, since $k < n$ makes the problem trivial, each of the generated subsets being of cardinality one, the $n \log_2 n$ term is typically not mentioned even if the input set does not come as sorted.

\subsection{Past Work} 
Top-$K$ query processing has a rich literature in many different domains, including information retrieval~\cite{bsa_acm94}, databases~\cite{mbg_acm04}, multimedia~\cite{cgm_ieee04}, business analytics~\cite{acdg_bcidsr03}, combinatorial objects~\cite{sta_idea11}, data mining~\cite{gd_sigkdd05}, or computational geometry~\cite{abz_soda11,kn_soda11,rgjr_svl11}. There are also other extensions of top-$k$ queries in other environments, such as no sorted access on restricted lists~\cite{bgm_icde02, ch_sigmod02}, ad hoc top-$k$ queries~\cite{lci_sigmod06} or
no need for exact aggregate scores~\cite{iae_vldb02}. 

 The subset-sum problem is a well-known NP-complete problem~\cite{gj_book79}, which asks whether there exists a subset $S'$ of a given set of integers $S$, whose elements sum to a given target $t$. A lot of its variants are also computationally hard, as for example when the integers are restricted to be only positive. However, the Top-$k$ version that we are dealing with can be solved in polynomial time as long as $k = O(n^c)$, where $n$ is the cardinality of $S$ and $c$ is a constant. However, if we have to report all the subsets of $S$, naturally the time required will be exponential in $n$. Typically, two different variations of the Top-$k$ Subset Sums problem are found in the literature. Some of them generate only the subset sums in the correct order whereas others report the respective subsets also along with their sums. Clearly the latter variation will need a little bit of higher resource in terms of time and/or space. Sanyal et al.~\cite{smhg_tcs20} developed algorithms for reporting all the top-$k$ subsets (top-$k$ combinations), where the subsets are of a fixed size $r$. Their proposed algorithm runs in $O(rk+k\log_2 k)$ time and some of its variants run in $O(r+k\log_2 k)$ time.

In the last few years, many programmers as well as researchers have been attracted to the problem of finding the sum of a particular subset whose rank is $k$, basically a variation of the Top-$k$ Subset Sums problem. Different solutions were proposed for reporting the top-$k$ subset sums. However, the most promising one amongst them appeared to be a $O(k \log_2 k)$ algorithm~\cite{d_math15} proposed by Eppstein. It uses a min-heap and a simple procedure for generating two new subsets from a subset that got extracted from the heap and then inserts these two new subsets into the heap. It first keeps the numbers in a sorted array of ascending order. At each step, given a nonempty subset of array-indices $S$, the algorithm defines the children of $S$ to be $(S - \{max(S) \}) \cup \{max(S) + 1 \}$ and $S \cup \{max(S) + 1 \}$. Note that the first child has the same number of indices as $S$ and the second one has just one more than its parent. Starting with the subset $\{1\}$ as the root node that corresponds to the singleton set with the smallest element from the original set, the child relation continuously inserts new subsets into the min-heap. It can be shown that the algorithm is capable of generating every nonempty subset of positive integers $(1 \ldots n)$. So the generation of the subset of indices in correct order is guaranteed by the above claim and the modus operandi of a min-heap. The sum of the elements belonging to each subset can be easily calculated and reported at every step. In this technique~\cite{d_math15}, each node of the heap needs to maintain two values--(i) a pointer to the maximum index and (ii) the corresponding subset sum. It can report all the top-$k$ subset sums in order, as and when they get generated or if needed only the sum of the $k^{th}$ subset. However, if it has to report the $k^{th}$ subset or as a matter of fact all the $k$ subsets, some extra pointers need to be stored in each node and some additional computation needs to be done as well.

Very recently, in the database domain, Deep et al.~\cite{dk_icdt21} worked on a similar problem. Here ranked enumeration of Conjunctive Query $(CQ)$ results were used to enumerate the tuples of $Q(D)$ according to the order specified by a rank function $rank$. The variable $Q(D)$ was used to denote the result of the query $Q$ over an input database $D$. Their proposed algorithm works in two phases~\cite{dk_icdt21}: a preprocessing phase that builds a data structure (basically a priority queue) and an enumeration phase that outputs $Q(D)$ according to the order specified by $rank$, using the data structure constructed in the preprocessing phase. The problem considered in this manuscript can also be solved using their approach by considering it as a full union of Conjunctive Query (UCQ) $\phi = \phi_1 \cup . . . \phi_n$, where $n$ is the size of the input set $R$, i.e., $\mid R \mid$. The solution will require a preprocessing time of $O(n^{subw+1} \log_2 n )$ and a delay of $O(n \log_2 n)$, where subw is the sub modular width~\cite{d_jacm13} of all decompositions across all CQs $\phi_i$.
\subsection{Our Contribution}

In this manuscript an efficient output sensitive algorithm is first proposed to report the Top-$k$ subset sums along with their subsets, where the size of the subsets $s$ can be anything between {\tt 1} and {\tt n}, with an overall running time of $O(nk + k \log_2 k)$ and we then improve it to $O(k \log_2 k)$. Both the algorithms were implemented and their runtimes were compared on randomly generated test cases. Another version of our algorithm is considered that reports only the top-$k$ subset sums without the subsets, which also runs in $O(k \log_2 k)$ time. It is further shown that, on a large number of problem instances with the inputs varying from small values of $n$ and $k$ to very large ones, our approach consistently performs better than a prior solution~\cite{d_math15} in terms of time and peak memory used, which means though the asymptotic time complexities are the same, the constant factor in our algorithm is definitely less than the earlier work.

\section{Outline of our Technique}

Our solution is based on constructing an implicit {\em metadata structure} $G$. The novelty of our work is that $G$ is never constructed explicitly, rather at run time, just the required portion of $G$ is generated on demand, which obviously saves the high time and space requirement of the preprocessing step.
The paper is organized as follows. In Section~\ref{sec:prob}, first $n$ {\em local metadata structures} $G_1$ to $G_n$ are introduced and it is shown that how they can be used to construct the full {\em metadata structure} $G$. In addition, it is further shown that how $G$ can be used in conjunction with a min-heap structure $H$ to obtain the desired top-$k$ subsets. In this section, we also highlight the problem of duplicate entries in heap $H$ and show how we can remove this problem by modifying the construction of the $G$. 
Ultimately, a modified $G$ is constructed that can report the desired top-$k$ subsets efficiently. Section~\ref{sec:prob} is concluded by showing that to answer a query, the required portions of $G$ can be generated on demand, so that the requirement for creating $G$ in totality is never needed as a part of preprocessing. Two different variations of the algorithm are presented, the latter version being an improvement over the former both in terms of time and space requirement. In Section~\ref{sec:result} the results of our implementation are presented and it is shown how the required runtime varies with different values of $n$ and $k$ for both algorithms. In the later part of Section~\ref{sec:result}, our first algorithm is slightly modified to report only the top-$k$ subset sums and compare our solution with an existing solution~\cite{d_math15}. Both methods are implemented and run under exactly the same inputs and it is shown that our algorithm is consistently performing better than the existing algorithm. Finally, in Section~\ref{section:conclusion}, the article is concluded and some open problems are mentioned.

\section{Generation of top-$k$ subsets}

\label{sec:prob}

In this section, we first consider the following-- given any input set $R$ of $n$ real numbers, and a positive integer $k$, we construct a {\em metadata structure} $G$ on demand to report the top-$k$ subsets efficiently. Here it is assumed that the numbers of the input set $R$ are kept in a list $R'=(r'_1, r'_2, \ldots, r'_n)$, sorted in non-decreasing order and let $P = \{1, 2, \ldots, n\}$ be the set of positions of the numbers in the list. A subset $S \subseteq R$ is now viewed as a sorted list of $\mid S \mid$ distinct positions chosen from $P$. 

\subsection{The {\em metadata structure} $G$}
The {\em metadata structure} $G$ is constructed as a layered Directed Acyclic Graph (DAG), $G=(V, E)$, in a fashion similar to an earlier work~\cite{smhg_tcs20}, where each node $v \in V$ contains the information of $\mid S \mid$ positions of a subset $S \subseteq R$. In DAG $G$, for each node, the $\mid S \mid$ positions are stored as a bit vector $B[1 \ldots n]$. Note that the bit vector $B$ has in total $\mid S \mid$ numbers of 1s and $n$ $-$ $\mid S \mid$ numbers of 0s. The bit value $B[i]=1$ represents that $r'_i$ of $R'$ is included in the subset $S$ whereas $B[i]=0$ says that ${{r'}_i}$ is not in $S$. Consider the bit vector {\tt 110100} for any subset $S$. It says that the $1^{st}$, $2^{nd}$, and $4^{th}$ numbers of the list $R'$, are included in the subset $S$, where the total number of numbers in $R$ is six. 

The directed edges between the nodes of $G$ are drawn using the concept of ``One Shift" as introduced by Sanyal et al.~\cite{smhg_tcs20}. 
In this current work, for each subset $S$, two different variants of ``One Shift" - ``Static One-shift" and ``Incremental One-shift" are considered.

The first variant is something similar to the earlier concept~\cite{smhg_tcs20}, where $S$ and $S'$ are two subsets with $\mid S \mid$ = $\mid S '\mid$ and $S'$ is obtained from $S$ by applying a one shift. This one shift is named as the ``Static One Shift", in order to distinguish it from the other variant. Let $v(S)$ and $v(S')$ be the nodes corresponding to the subsets $S$ and $S$' and note that their bit vector representations contain the same number of 1s. The formal definition is given below. 

\begin{definition}[Static One Shift]
Let $P_{ S}=(p_1, p_2, \ldots, p_{\mid S \mid})$ denote the list of sorted positions of the numbers in a subset $S \subseteq R$ and let $P_{ S' }=(p'_1, p'_2, \ldots, p'_{\mid S' \mid})$ denote the list of sorted positions of the numbers in another subset $S' \subseteq R$ where $\mid S' \mid$ = $\mid S \mid$. Now, if for some $j$, $p'_j = p_j + 1$ and $p'_i = p_i$ for $i \neq j$, then, it is said that $S'$ is a \emph{Static One Shift} of $S$. 
\end{definition}

The second variation is somewhat different and it is named as the ``Incremental One Shift" where the subset $S'$ is a one shift of the subset $S$ and $\mid S' \mid$ = $\mid S \mid$ + 1, i.e., $S'$ contains one more element than $S$. Hence, the bit vector representation of the node $v(S')$ has one extra $1$ than $v(S)$. It is formally defined below.

\begin{definition}[Incremental One Shift]
Let $P_{ S}=(p_1, p_2, \ldots, p_{\mid S \mid})$ denote the list of sorted positions of the numbers in a subset $S \subseteq R$ and
let $P_{ S' }=(p'_1, p'_2, \ldots, p'_{\mid S' \mid})$ denote the list of sorted positions of the numbers in another subset $S' \subseteq R$, where $\mid S' \mid$ = $\mid S \mid$ $+$ $1$. Now, if $\forall i$, $1 \le i \le {\mid S \mid}$, the position index $p_i$ is also present in $P_{S'}$ but for some $j$, $1 \leq j \le {\mid S' \mid}$, the position index $p'_j$ is not in $P_S$, then, we say that $S'$ is an \emph{Incremental One Shift} of $S$.
\end{definition}

Now, let us consider an example of the above two types of shifts. Let, $R=\{3, 7, 12, 14, 25, 45, 51\}$, $S=\{3, 12, 45, 51\}$ in both the cases and $S'=\{3, 14, 45, 51\}$ in the static case and $S'=\{3, 12, 25, 45, 51\}$ in the incremental case. Note that, in the static case, $P_{S}=(1, 3, 6, 7)$ and $P_{ S' }=(1, 4, 6, 7)$ and $S'$ is a \emph{Static One Shift} of $S$, since $p'_2 = p_2 + 1$ and $p'_i = p_i$ for $i \neq 2$. In the latter case, $P_{ S' }=(1, 3, 5, 6, 7)$. Here, $S'$ is a \emph{Incremental One Shift} of $S$, since $\forall i$, $1 \le i \le 4$, the position index $p_i$ is also present in $P_{S'}$ but the position index $p'_3=5$ is not present in $P_S$.

\subsection{Construction of {\em local metadata structures} $G_1$ to $G_n$ using Static One Shift}
Note that if $S$ is a non-empty subset of $R$, then $\mid S \mid$ $\in \{1..n\}$. Using the concept of ``Static One Shift", we first construct a {\em local metadata structure} for each possible size of the subset $S$ and name this {\em local metadata structure} as $G_{\mid S \mid}$. So, $G_{\mid S \mid}$ is basically a directed acyclic graph where each subset of size ${\mid S \mid}$ is present exactly once in some node of the graph and in its bit vector representation, the number of $1$s is also ${\mid S \mid}$. Let us consider the node corresponding to any subset $S$ be $v(S)$. Then there will be a directed edge from node $v(S)$ to node $v(S')$ iff the subset $S'$ is a static one shift of the subset $S$. Clearly we will have $n$ such {\em local metadata structures} 
$G_1$ to $G_n$. The {\em metadata structure} $G_i$ for any fixed value $i$ can be used to generate the top-$k$ subsets of size $i$ efficiently, ranked on the basis of summation function $F$.

To faciliate the process, we maintain a min-heap $H_i$ to store the candidate subsets of size $i$. Clearly there will be $n$ such local min-heaps $H_1$ to $H_n$. Initially, we insert the root node of the metadata structure $G_i$ as the only element in the heap $H_i$. Then to report the top-$k$ subsets of size $i$, at each step, we extract the minimum element $Z$ of $H_i$ and output it as an answer, and then insert all its children $X$ from $G_i$ into $H_i$ with key value $F(X)$. But the problem that we face is that some children $X$ may be present in $H_i$ already, as $X$ may be a static one-shift of more than one nodes in $G_i$. To avoid this problem of duplicate entries in heap $H_i$, we use a technique that is similar to `mandatory one shift' as introduced by Sanyal et al.~\cite{smhg_tcs20} and we name it as the ``Mandatory Static One Shift".

\begin{figure}[t]
\begin{center}
\scalebox{.45}{\includegraphics{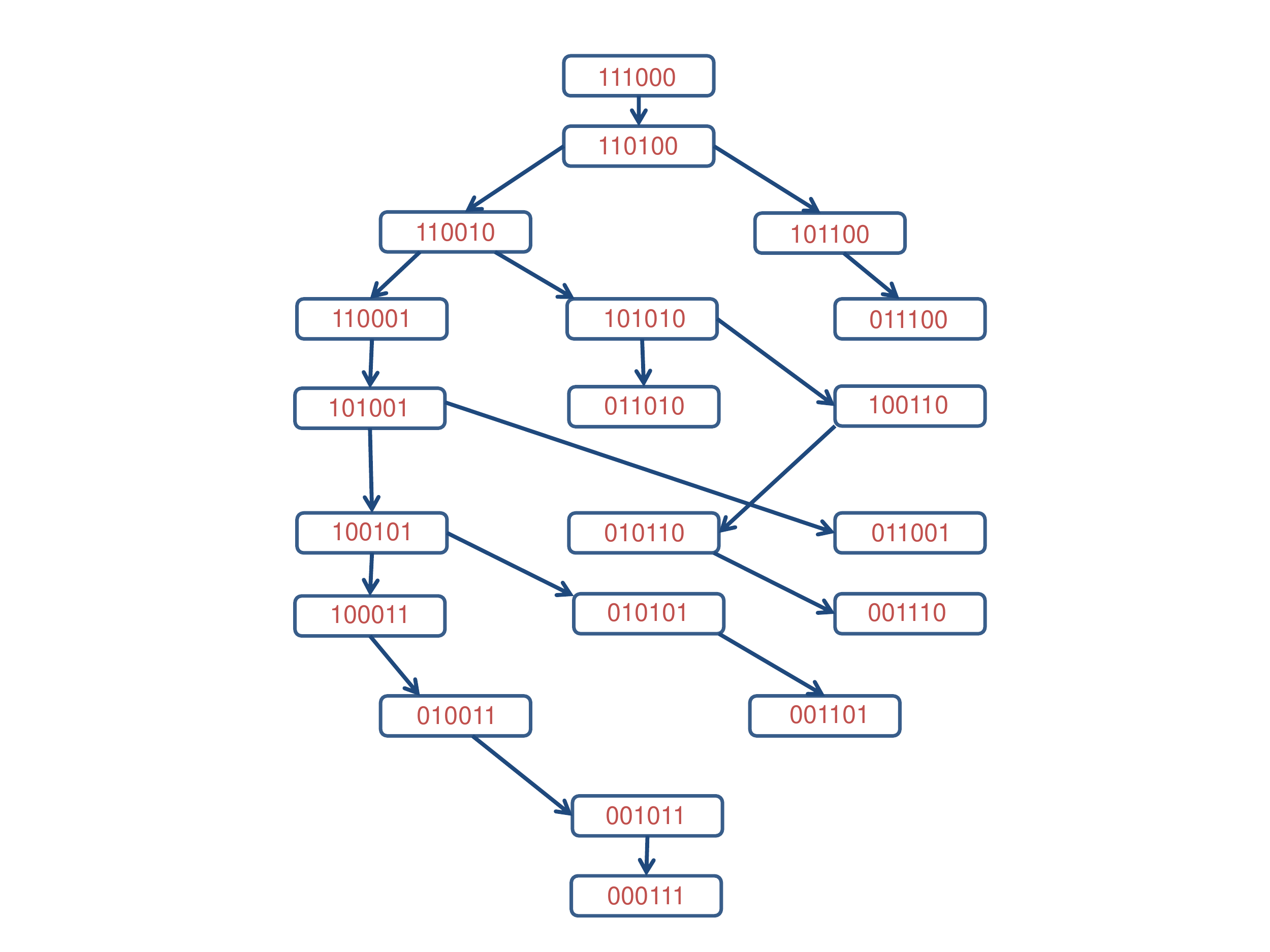}}
\end{center}
\caption{The {\em metadata structure} $G_3$ for the case $n=6$ with mandatory static one shift}
\label{fig:1}
\end{figure}

\begin{definition}[Mandatory Static One Shift]
$S'$ is said to be a \emph{Mandatory Static One Shift} of $S$, if (i) $S'$ is a static one-shift of $S$, and (ii) among all subsets of whom $S'$ is a static-one-shift, $S$ is the 
one whose list of positions is lexicographically the smallest
(equivalently, the $n$-bit string representation is lexicographically the largest).
\end{definition}
Figure~\ref{fig:1} shows the {\em local metadata structure} $G_3$ for the case $n=6$ with mandatory static one shift. Here, the node $v={\tt 101010}$ can be obtained by a static one shift from both the nodes {\tt 110010} and {\tt 101100}. However, $v$ is the mandatory static one shift of only the node containing {\tt 110010}, \emph{and not that of} {\tt 101100}.

\subsection{Construction of the {\em metadata structure} $G$ with Incremental One Shift}

In order to define the complete {\em metadata structure} $G$ for our present scenario, for each subset size $i \in [1 \ldots n - 1], \mid R \mid$ = $n$, two consecutive {\em local metadata structures} $G_i$ and $G_{i+1}$ are connected using the concept of ``Incremental One Shift". Here a directed edge goes from a node $v(S) \in V (G_{\mid S \mid})$ to a node $v(S') \in V(G_{\mid S \mid+1})$. Note that the bit vector representation of the node $v(S')$ has one extra `1' than $v(S)$. 

Figure~\ref{fig:2} shows the {\em metadata structure} $G$ for the case $n=4$. Here, a directed edge goes from a node $v={\tt 0100}$ of $G_1$ to nodes $v_1={\tt 1100}$, $v_2={\tt 0110}$, and $v_3={\tt 0101}$ of $G_2$, where all $v1$, $v_2$ and $v_3$ are the ``Incremental One Shifts" of $v$.

\begin{figure}[t]
\begin{center}
\scalebox{.45}{\includegraphics{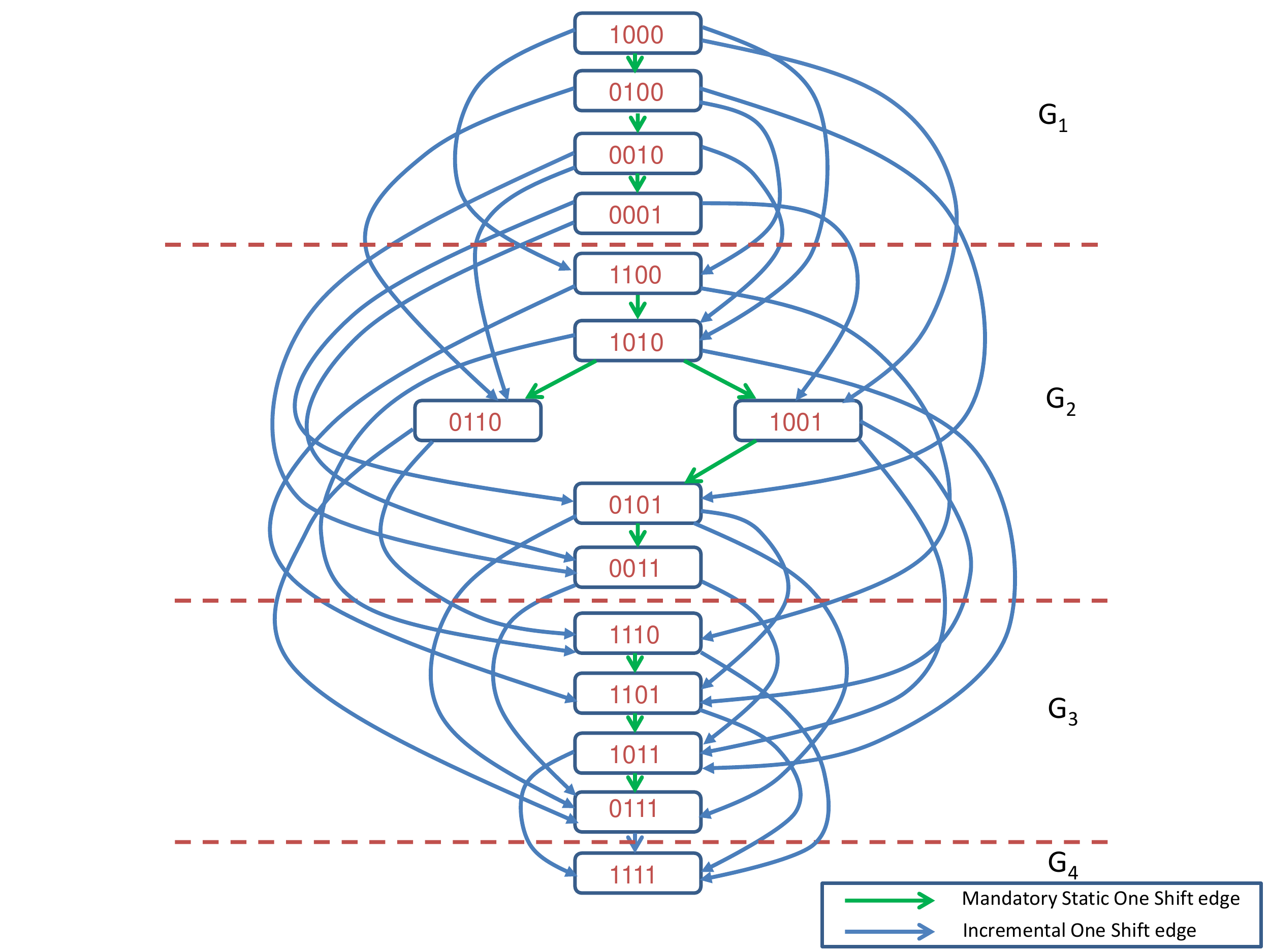}}
\end{center}
\caption{The model {\em metadata structure} $G$ for $n=4$ with \emph{incremental one shift}}
\label{fig:2}
\end{figure}

The definition of incremental one shift leads to the following two observations.
\begin{observation}
\label{lemma:diff}
Let $(p_1, p_2, \ldots, p_{\mid S \mid})$ denote the list of sorted positions of the ${\mid S \mid}$ numbers in a subset $S \subseteq R$ and further let $(p'_1, p'_2, \ldots, p'_{\mid S' \mid})$ denote the list of sorted positions of the ${\mid S' \mid}$ numbers in a subset $S' \subseteq R$, where $\mid S' \mid$ = $\mid S \mid$ $+$ $1$.
Then, $S'$ is an incremental one shift of $S$ if and only if for some $j$, $1 \leq j \le {\mid S' \mid}$, the position index $p'_j$ is not in $(p_1, p_2, \ldots, p_{\mid S \mid})$ and $\sum_i p'_i - 
\sum_i p_i = p'_j$. 
\end{observation}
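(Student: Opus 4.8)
The plan is to prove the two directions of the equivalence separately, using as the pivotal fact that, because $\mid S' \mid = \mid S \mid + 1$, the clause ``every $p_i$ appears in $P_{S'}$'' in the definition of an incremental one shift is equivalent to $P_{S'} = P_S \cup \{q\}$ for a single position $q \notin P_S$; that is, $S'$ is obtained from $S$ by inserting exactly one new position. I would first make this reformulation explicit, since it converts a statement about two sorted lists into a clean statement about one set containment, and it is the form that both directions can be checked against.

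For the forward direction, I would assume $S'$ is an incremental one shift of $S$, so $P_{S'} = P_S \cup \{q\}$ with $q \notin P_S$. Writing $q = p'_j$ for the index $j$ at which the new position sits in the sorted list $P_{S'}$, the list of positions of $S'$ is exactly that of $S$ together with the single extra value $p'_j$. Hence the two position-sums telescope, $\sum_i p'_i = \sum_i p_i + p'_j$, which rearranges to $\sum_i p'_i - \sum_i p_i = p'_j$, and by construction $p'_j \notin P_S$. This direction is routine and requires nothing beyond the reformulation.

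For the converse, I would start from the hypotheses $p'_j \notin P_S$ and $\sum_i p'_i - \sum_i p_i = p'_j$ and rewrite the sum condition as $\sum_i p'_i - p'_j = \sum_i p_i$. Thus the positions of $S'$ other than $p'_j$, say $Q := P_{S'} \setminus \{p'_j\}$, which has exactly $\mid S \mid$ elements, have the same total as the $\mid S \mid$ positions of $S$. The target is then to upgrade this equality of totals to the set equality $Q = P_S$, since $Q = P_S$ is precisely the containment $P_S \subseteq P_{S'}$ demanded by the definition of an incremental one shift.

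The step I expect to be the main obstacle is exactly this upgrade. Equal cardinality together with equal sum does not by itself force $Q = P_S$, so the argument must exploit the remaining structure rather than merely match totals. I would attempt it via the symmetric difference: from $\sum Q = \sum P_S$ one obtains $\sum(Q \setminus P_S) = \sum(P_S \setminus Q)$, and the real content is to show that both of these sets are empty. Making that vanishing rigorous is where I expect the difficulty to concentrate, and I anticipate that closing it cleanly relies on the structural context in which the observation is actually invoked, namely that $S'$ is reached from $S$ by inserting a single new position along an incremental-one-shift edge of $G$, rather than on the sum identity in isolation.
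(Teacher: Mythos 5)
Your forward direction is correct, but the obstacle you flagged in the converse is not a technical step you failed to finish --- the converse is simply false as stated, so no argument can close the gap. Concretely, take $n \ge 5$, $P_S = (1,4)$ and $P_{S'} = (2,3,5)$, so that $\mid S' \mid = \mid S \mid + 1$. With $j = 3$ we have $p'_3 = 5 \notin P_S$ and $\sum_i p'_i - \sum_i p_i = 10 - 5 = 5 = p'_3$, so the right-hand side of the observation holds; yet $p_1 = 1$ does not occur in $P_{S'}$, so $S'$ is not an incremental one shift of $S$. This is exactly the failure mode you anticipated: here $P_{S'} \setminus P_S = \{2,3,5\}$ and $P_S \setminus P_{S'} = \{1,4\}$ — the sum identity balances without either set being empty, so equal cardinality plus matching sums genuinely does not force the containment $P_S \subseteq P_{S'}$.

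For the comparison you were asked to make: the paper offers no proof at all — the statement is introduced with ``The definition of incremental one shift leads to the following two observations'' and is never argued — so there is no hidden argument you failed to reconstruct. Only the forward (telescoping) direction is true as written. A biconditional can be salvaged only by strengthening the hypothesis, e.g.\ requiring that \emph{exactly one} index $j$ satisfies $p'_j \notin P_S$; but then $\mid P_S \cap P_{S'} \mid = \mid S \mid$ already forces $P_S \subseteq P_{S'}$, and the sum clause becomes a redundant certificate rather than a working hypothesis. Your closing diagnosis — that in the algorithm the containment holds by construction, because a child node is generated from its parent by inserting a single new position, and the sum identity is only ever used in that context — is precisely right. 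The same defect occurs in the paper's later observation characterizing mandatory incremental one shifts: $P_S = (2,5)$, $P_{S'} = (1,3,4)$ satisfies all three of its listed conditions (the new index $1$ is absent from $P_S$, is smaller than $p_1$, and equals the difference of the sums), yet $S'$ is not a shift of $S$ of any kind.
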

\begin{observation}
Each node $v(S), S\subseteq R$ of the {\em metadata structure} $G$ has $n$ - $\mid S \mid$ incremental one shift children. 
\end{observation}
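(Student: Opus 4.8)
The plan is to show that the incremental one shift children of $v(S)$ are in bijection with the positions of $R$ that do not belong to $S$, of which there are exactly $n - |S|$. First I would unpack the definition of an incremental one shift to obtain a clean structural characterization. If $S'$ is an incremental one shift of $S$, then by definition every position $p_i$ of $S$ (for $1 \le i \le |S|$) also appears in $P_{S'}$, so $S \subseteq S'$; combined with the hypothesis $|S'| = |S| + 1$, this forces $S' = S \cup \{q\}$ for a single new position $q \in P \setminus S$, and the second clause of the definition (the existence of some $p'_j \notin P_S$) is then automatically satisfied. Conversely, for any position $q \in P \setminus S$, the set $S' = S \cup \{q\}$ contains every element of $S$, has cardinality $|S| + 1$, and contains $q \notin S$, so it meets the definition and is an incremental one shift of $S$. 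Thus the incremental one shift children of $v(S)$ are exactly the sets $S \cup \{q\}$ with $q \in P \setminus S$.

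Next I would count these children. The map $q \mapsto S \cup \{q\}$ from $P \setminus S$ to the set of incremental one shift children is surjective by the characterization above, and it is injective because distinct new positions $q_1 \ne q_2$ yield distinct supersets $S \cup \{q_1\} \ne S \cup \{q_2\}$, differing precisely in the added element. Hence the map is a bijection, and the number of incremental one shift children equals $|P \setminus S|$. Since $P = \{1, 2, \ldots, n\}$ has $n$ elements and $S$ occupies $|S|$ of them, $|P \setminus S| = n - |S|$, which establishes the claim. As an alternative route I could invoke Observation~\ref{lemma:diff}, which states that the added element equals $\sum_i p'_i - \sum_i p_i$; this shows that each child determines a unique new position and vice versa, reconfirming the bijection.

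There is essentially no difficult step here: once the characterization $S' = S \cup \{q\}$ is made explicit, the result is a direct counting consequence of the definition. The only point requiring care — which I would state rather than leave implicit — is the verification that the correspondence between added positions and children is one-to-one in both directions, so that no two distinct positions produce the same child and every child arises from some position. After that, the count $n - |S|$ follows immediately.
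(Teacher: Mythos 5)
Your proof is correct and follows exactly the reasoning the paper relies on: the paper states this observation without proof, treating it as immediate from the definition of incremental one shift, and your bijection between children $S' = S \cup \{q\}$ and positions $q \in P \setminus S$ is precisely the argument being left implicit. Your explicit verification of injectivity and surjectivity of the map $q \mapsto S \cup \{q\}$ is a faithful, slightly more careful write-up of the same approach.
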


\subsection{Query answering with heap}
\label{queryans}

Note that in $G$, two subsets $S_i$ and $S_j$ are comparable if there is a directed path between the two corresponding nodes $v(S_i)$ and $v(S_j)$. However, if there is no path in $G$ between the nodes $v(S_i)$ and $v(S_j)$, then it is required to calculate the values of the summation function $F(S_i)$ and $F(S_j)$ explicitly to find out which one ranks higher in the output list. To facilitate this process, a min-heap $H$ is maintained to store the candidate subsets $S$ according to their key values $F(S)$. Initially, the root node $T$ of the {\em metadata structure} $G$ is inserted in min-heap $H$, with key value $F(T)$. Then, to report the desired top-$k$ subsets, at each step, the minimum element $Z$ of $H$ is extracted and report it as an answer, and then insert each of its children $X$ from $G$ into $H$ with key value $F(X)$. Obviously, the above set of steps have to be performed $k-1$ times until all the top-$k$ subsets are reported.

However, the problem that we face here is the high out degree of each node $v$ in $G$. Here each node can have at most two \emph{static one shift} edges~\cite{smhg_tcs20} but has a high number of \emph{incremental one shift} edges. As a consequence, many nodes have multiple parents in $G$. 
Note that, for any node $v(S)$ of $G$ with multiple parents, we need to insert the subset $S$ or rather the node $v(S)$ into the heap $H$, right after reporting the subset stored in any one of its parent nodes, i.e. when for the first time we extract any of its parents say $u$ from $H$. On the other hand, $v(S)$ can be extracted from $H$ only after all the subsets stored in its parent nodes have been reported as part of the desired result, i.e., all their corresponding nodes have been extracted from $H$. So during the entire lifetime of $v$ in $H$, whenever the subset stored in some other parent of $v$ is reported, either the subset $S$ needs to be inserted again in $H$ or a checking is to be performed whether any node corresponding to $S$ is already there in $H$. The former strategy will lead to duplication in the heap $H$ and the latter one will lead to too much overhead as we then have to then check for prior existence in $H$, for each and every child of any node that will get extracted from $H$. Either way, the time complexity will rise.

To avoid this problem of duplication, whenever a node in $H$ is inserted, we also store the label of that node in a Skip List or a height-balanced binary search tree (AVL tree) $T$, and only insert $v$ to $H$ if $v$ is not already present in $T$. 
The above step has to be performed exactly $k-1$ times till all top-$k$ subsets are generated as output. For a summary of this discussion given above, see Algorithm~\ref{algo1}, which is somewhat similar in principle to the query algorithm that works along with a preprocessing step, presented in our earlier work~\cite{smhg_tcs20}. However, the on-demand version (Algorithm~\ref{algo2}) presented in this work is totally different from that of our earlier work.

\begin{algorithm}

\caption{Top-$k$\_Subsets\_With\_Metadata\_Structure($R[1 \ldots n]$, $G$, $k$)}
\label{algo1}
\begin{algorithmic}[1]
\State Create an empty min-heap $H$;
\State Create an empty binary search tree $T$;
\State Sort the $n$ real numbers of $R$ in non-decreasing order;
\State $Root \leftarrow$ the root node of $G$ (Root node of $G_1$);
\State Insert $Root$ into $H$ with key value $F(Root)$;
\State Insert $Root$ into $T$;
\For{$q \leftarrow 1$ to $k$}
\State $Z \leftarrow$ extract-min($H$); 
\State Output $Z$ as the $q^{th}$ best subset;
\State Delete $Z$ from $T$;
\For{each child $X$ of $Z$ in $G$}
\If{$X$ is not found in $T$}
\State Insert $X$ into $H$ with key value $F(X)$; $\newline$ 
\Comment $F$ is the summation function, such that $F(X) = \sum_{r \in X} r$, for any subset $X\subseteq R$.
\State Insert $X$ into $T$;
\EndIf
\EndFor
\EndFor
\end{algorithmic}
\end{algorithm}

Actually a \emph{min-max-heap} can be used instead of a min-heap, so as to limit the number of candidates in $H$ to be at most $k$. Alternatively, a max-heap $M$ along with $H$ can be used, to achieve the same feat in the following way. Whenever an element is inserted in $H$, it is also inserted in $M$ and an invariant is maintained such that the size of $M$ is always less than or equal to $k$. If it tries to cross $k$, the maximum element from $M$ as well as $H$ are removed, since such an element can never come in the list of top-$k$ elements being the maximum within $k$ elements. Note that Algorithm~\ref{algo1} can be made even more output-sensitive by dynamically limiting the number of elements in the two heaps by $(k - y)$ if $y$ is the number of subsets already reported. This is also being reflected in the pseudo-code of Algorithm~\ref{algo2} later. This leads to the following lemma.
\begin{lemma}
\label{lemma_space_pre}
The extra working space of Algorithm~\ref{algo1}, in addition to that for maintaining $G$, is $O(kn)$.
\end{lemma}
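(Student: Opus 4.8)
The plan is to account for every data structure that Algorithm~\ref{algo1} maintains apart from the metadata structure $G$ itself, and to bound the size of each. Inspecting the pseudocode, the only auxiliary structures are the min-heap $H$ of candidate subsets and the duplicate-detection structure $T$ (the AVL tree or skip list); optionally, a companion max-heap $M$ is also used. Hence the extra working space is exactly the combined space of these structures, and it suffices to bound each by $O(kn)$.

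First I would bound the number of live candidates. A naive count is unfavourable: each extraction from $H$ inserts up to $n - \mid S \mid = O(n)$ incremental-one-shift children together with the static-one-shift children, so across $k$ rounds as many as $O(kn)$ subsets could accumulate in $H$, which would inflate the space to $O(kn^2)$. The remedy is the candidate-capping mechanism described just before the statement: either a min-max-heap, or the companion max-heap $M$ with the invariant $\mid M \mid \le k$ that evicts the current maximum from both heaps whenever the count would exceed $k$ (or, in the output-sensitive refinement, whenever it would exceed $k - y$ after $y$ subsets have been reported). Since a subset that is already the maximum among $k$ stored candidates is dominated by at least $k$ smaller ones and therefore can never reach the top-$k$ output, these evictions are safe, and they guarantee that at every instant $H$ holds at most $k$ subsets. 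I would also note that, because Algorithm~\ref{algo1} inserts each surviving child into $H$ and into $T$ together and deletes the extracted element from $T$ as soon as it leaves $H$, the contents of $T$ stay in one-to-one correspondence with those of $H$; consequently $\mid T \mid = \mid H \mid \le k$, and likewise $\mid M \mid \le k$.

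Next I would charge the per-item cost. Each subset $S$ is stored explicitly as its bit vector $B[1 \ldots n]$, and the label retained in $T$ is likewise an $n$-bit string, so every element of $H$, every entry of $T$, and every element of $M$ occupies $O(n)$ space. Multiplying the $O(k)$ item count by the $O(n)$ per-item cost yields $O(kn)$ for each of $H$, $T$, and $M$ separately, and summing the constantly many structures gives the claimed $O(kn)$ total extra working space.

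The main obstacle is precisely the candidate-count bound: without the min-max-heap or companion-max-heap capping, the heap really can grow to $\Theta(kn)$ live entries and the target bound fails, so the heart of the argument is to invoke this capping correctly and to confirm that discarding the running maximum never removes a genuine top-$k$ subset. Once that $O(k)$ bound on the number of stored items is established, the remaining steps are a routine count-times-size multiplication.
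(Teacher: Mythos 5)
Your proposal is correct and follows essentially the same route as the paper's own proof: cap the number of live candidates at $O(k)$ via the min-max-heap (or companion max-heap) mechanism, observe that the duplicate-detection tree $T$ mirrors the heap contents, and multiply by the $O(n)$ bit-vector cost per node. Your write-up merely makes explicit two points the paper leaves implicit -- the safety of evicting the running maximum and the fact that without capping the bound would degrade to $O(kn^2)$ -- but the argument is the same.
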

\begin{proof}
By using a min-max-heap, the size of the heap $H$ never exceeds $k$, and so does the size of the AVL tree $T$. Also, each node contains a bit-pattern of size $n$. 
\end{proof}
\begin{lemma}
Apart from the time to sort $R$, Algorithm~\ref{algo1} runs in $O (nk\log_2 k)$ time. 
\end{lemma}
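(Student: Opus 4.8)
The plan is to charge the entire running time to the $k$ iterations of the outer \texttt{for}-loop, and within a single iteration to the work spent on the extracted node together with all of its children in $G$. Since the sort is excluded by hypothesis, it suffices to show that one iteration runs in $O(n\log_2 k)$ time.

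First I would bound the out-degree of a node. By the observation that each node $v(S)$ has $n-|S|$ incremental one shift children, together with the fact that a node admits at most two static one shift children~\cite{smhg_tcs20}, every extracted node $Z$ has $O(n)$ children. As the loop performs exactly $k$ extractions, the total number of children examined over the whole execution is $O(nk)$.

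Next I would argue that each child is disposed of in $O(\log_2 k)$ time. For a child $X$ of $Z$, the key $F(X)$ is obtained incrementally in $O(1)$ time: because $X$ is a one shift of $Z$, it differs from $Z$ in a single position, so $F(X)=F(Z)+\delta$ for a one-element correction $\delta$, rather than recomputing the whole summation in $O(n)$ time. Testing membership of $X$ in $T$, and—when $X$ is new—inserting it into both $H$ and $T$, are operations on balanced structures whose sizes are held at most $k$ by the min-max-heap device discussed just before the lemma; comparing the numeric key/label in $O(1)$ time, each such operation costs $O(\log_2 k)$. The single extract-min on $H$ and the single deletion of $Z$ from $T$ per iteration cost $O(\log_2 k)$ as well. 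Hence one iteration costs $O(n\log_2 k)$, dominated by its $O(n)$ children, and summing over the $k$ iterations gives $O(nk\log_2 k)$.

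The crux, and the step I expect to be the main obstacle, is keeping the per-child cost at $O(\log_2 k)$ rather than $O(n\log_2 k)$. Two points are essential here. First, the key must be maintained incrementally, so that $F$ is never recomputed from scratch; otherwise an extra factor of $n$ creeps in. Second, the heap and the balanced tree must be manipulated so that each comparison and each relocation is $O(1)$—that is, by comparing the constant-size numeric key/label and relocating the $n$-bit payload by reference—so that the width of the stored bit vectors does not inflate the per-operation cost. Granting these two points, the lone factor of $n$ in the bound stems purely from the $O(n)$ children per node, which is precisely the quantity eliminated in the subsequent improvement of the algorithm.
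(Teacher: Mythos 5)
Your proposal is correct and follows essentially the same argument as the paper's proof: bound the out-degree of each extracted node by $O(n)$ (the $n-|S|$ incremental shifts plus at most two static shifts), compute each child's key $F(X)$ incrementally in $O(1)$ time since a one shift changes only $O(1)$ positions, and charge $O(\log_2 k)$ per heap/AVL-tree operation because the min-max-heap device caps both structures at size $k$, giving $O(nk)$ operations and $O(nk\log_2 k)$ total. The only difference is cosmetic—you account per iteration while the paper sums $\sum_{i=1}^{k}(n-|S_i|+2)$ globally—and your explicit caveat about $O(1)$ label comparisons in $T$ is a detail the paper leaves implicit.
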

\begin{proof}
Any node $v(S)$ in $G$ has at most $(n$ - $\mid S \mid$ $+$ $2)$ children, and the value of $F(X)$ for all children $X$ can be computed in a total of $O(n$ - $\mid S \mid$ $+$ $2)$ time (using dynamic programming), since there are only $O(1)$ differences between $v$ and any of its children. Now, if the subset reported at the $i^{th}$ step is $S_i$, then $ \sum_{i=1}^{k} n$ - $\mid S_i \mid$ $+$ $2$, i.e. $O(nk)$ insertions and extract-min operations are performed on the min-max-heap $H$, and at most $O(nk)$ search, insertions, and deletions are performed on the AVL tree $T$. Since the size of $H$ and $T$ are bounded from above by $k$, the overall running time is $O(nk+nk\log_2 k) = O(nk\log_2 k)$.
\end{proof}

\begin{figure}[t]
\begin{center}
\scalebox{.45}{\includegraphics{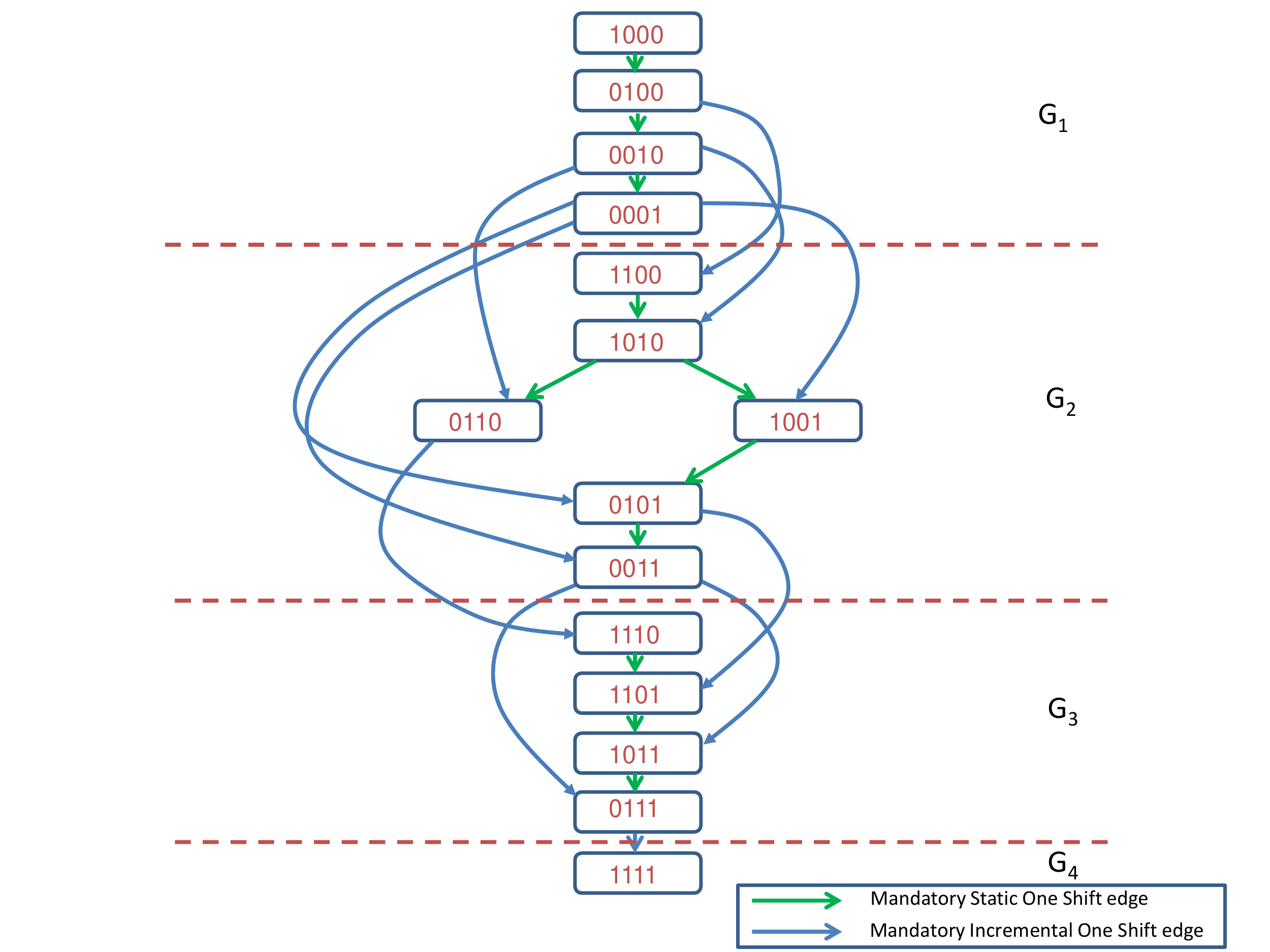}}
\end{center}
\caption{The model {\em metadata structure} $G$ for $n=4$ with \emph{mandatory incremental one shift}}
\label{fig:3}
\end{figure}

\subsection{Modified {\em metadata structure} $G$ -- version I}

In order to reduce the overall time complexity, a natural choice would be to reduce the number of \emph{incremental one shift} edges between two consecutive {\em local metadata structures} $G_i$ and $G_{i+1}$ ($i \in [1 \ldots n - 1]$), so that duplication problem in the heap will automatically get reduced, we define below a mandatory version of the incremental one shift.

\begin{definition}[Mandatory Incremental One Shift]
$S'$ is said to be a \emph{Mandatory Incremental One Shift} of $S$, if (i) $S'$ is an incremental one shift of $S$, and (ii) among all the subsets of $R$, for which $S'$ is an incremental one-shift, $S$ is the one whose position sequence is lexicographically the largest (equivalently, the $n$-bit string representation is lexicographically the smallest).
\end{definition}

The constructed {\em metadata structure} $G$ after applying \emph{mandatory incremental one shift}, exhibits the following interesting properties :

\begin{enumerate}
\item Each valid subset can be reached from the root.
\item The root of $G_1$ has no parent. Every other node of $G_1$ has a unique parent.
\item Each of the roots of the other data structures $G_2$ to $G_n$ has a unique parent (by mandatory incremental one shift) and all other nodes have exactly two parents (one by mandatory incremental one shift and the other from mandatory static one shift).
\item The bit-pattern corresponding to every child of a node can be deduced from the bit-pattern corresponding to that node.
\end{enumerate}

Figure~\ref{fig:3} gives an example of the {\em metadata structure} $G$ for the case $n=4$. Note that $v = {\tt 1101}$ is a mandatory incremental one shift of {\tt 0101}, but
$v$ is \emph{not} a mandatory incremental one shift of {\tt 1100} or {\tt 1001}.

The definition of mandatory incremental one shift directly leads to the following two observations.
\begin{observation}
\label{obs:diff}
Let $(p_1, p_2, \ldots, p_{\mid S \mid})$ denote the list of sorted positions of the numbers in a subset $S \subseteq R$ and
let $(p'_1, p'_2, \ldots, p'_{\mid S' \mid})$ denote the list of sorted positions of the numbers in another subset $S' \subseteq R$, where $\mid S' \mid$ = $\mid S \mid$ $+$ $1$.
Then, $S'$ is a mandatory incremental one shift of $S$ if and only if for some $j$, $1 \leq j \le {\mid S' \mid}$--\\ 
i) the position index $p'_j$ is not in $(p_1, p_2, \ldots, p_{\mid S \mid})$,\\ 
ii) $p'_j<p_1$, and\\ 
iii) $\sum_i p'_i - \sum_i p_i = p'_j$. 
\end{observation}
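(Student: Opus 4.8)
The plan is to reduce the statement to the already-established characterization of an ordinary incremental one shift in Observation~\ref{lemma:diff}, and then to translate the extremal ``lexicographically largest parent'' clause in the definition of a mandatory incremental one shift into the single inequality (ii), namely $p'_j < p_1$. First I would record that, by Observation~\ref{lemma:diff}, conditions (i) and (iii) taken together are exactly equivalent to the assertion that $S'$ is an ordinary incremental one shift of $S$, with $p'_j$ the distinguished added index. Hence the only content that remains is to show that, under this hypothesis, the mandatory requirement -- that $S$ have the lexicographically largest position sequence among all subsets of which $S'$ is an incremental one shift -- holds if and only if $p'_j < p_1$.

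The key structural fact I would establish is that the subsets $S$ for which $S'$ is an incremental one shift are precisely the $\mid S' \mid$ subsets obtained by deleting a single index from $S'$. Indeed, if $S' = S \cup \{p'_j\}$ with $p'_j \notin S$, then $S = S' \setminus \{p'_j\}$, so each candidate parent corresponds to the removal of exactly one element $p'_m$ of $S'$, and the index that must be added back to recover $S'$ is exactly that $p'_m$.

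I would then compare the position sequences of two such candidate parents, say those obtained by removing $p'_a$ and $p'_b$ with $a<b$. Both sequences begin with $p'_1,\ldots,p'_{a-1}$; at coordinate $a$ the first sequence (delete $p'_a$) has the entry $p'_{a+1}$, while the second (delete $p'_b$) still has the entry $p'_a$. Since the indices of $S'$ are sorted strictly increasingly, $p'_a < p'_{a+1}$, so deleting the \emph{earlier} index yields a lexicographically larger sequence. Iterating this comparison, the lexicographically largest parent is the one obtained by deleting the smallest index $p'_1$, i.e.\ the added index is $p'_j = p'_1 = \min(S')$.

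Combining these facts completes the equivalence. In the mandatory case the added index satisfies $p'_j = \min(S')$, whence the smallest index of $S$ is $p_1 = p'_2 > p'_1 = p'_j$, giving (ii). Conversely, if (i), (ii), (iii) hold for some $j$, then (i) and (iii) already yield via Observation~\ref{lemma:diff} that $S'$ is an incremental one shift of $S$, while (ii) forces $p'_j < p_1 = \min(S)$ and hence $p'_j$ is smaller than every index of $S$; therefore $p'_j = \min(S')$ and $S = S' \setminus \{\min(S')\}$ is the lexicographically largest parent, so $S'$ is a mandatory incremental one shift of $S$. The main obstacle is the lexicographic comparison in the third step -- getting the coordinate-by-coordinate argument exactly right, and in particular pinning down that the decision at coordinate $a$ rests on the strict inequality $p'_a < p'_{a+1}$ coming from the sortedness and distinctness of the positions; everything else is bookkeeping layered on top of Observation~\ref{lemma:diff}.
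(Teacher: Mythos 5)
Your argument is sound, but there is no paper proof to compare it to: the authors state Observation~\ref{obs:diff} with no proof whatsoever, claiming it follows ``directly'' from the definition of mandatory incremental one shift. What you wrote is the argument they left implicit, and its two key steps are exactly right: (a) the candidate parents of $S'$ under incremental one shift are precisely the $\mid S'\mid$ single-element deletions $S' \setminus \{p'_m\}$, and (b) the coordinate-by-coordinate comparison showing that deleting the smallest index $p'_1$ produces the lexicographically largest position sequence, so the mandatory parent is $S' \setminus \{\min(S')\}$ and ``mandatory'' is equivalent to the added index lying below $p_1$. One caveat you should be aware of: your converse direction delegates ``(i) and (iii) imply incremental one shift'' to Observation~\ref{lemma:diff}, whose \emph{if} direction is false when read literally --- take $S=\{3,4\}$ and $S'=\{1,2,5\}$; then (i), (ii) and (iii) all hold with $p'_j=1$, yet $3 \notin S'$, so $S'$ is not an incremental one shift of $S$. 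The sum identity (iii) certifies that $p'_j$ is the unique added element only when the containment $S \subseteq S'$ is already known. This defect is inherited from the paper, which states both observations with the containment left tacit; your structural fact (a) is precisely what repairs the argument once that hypothesis is made explicit, but as written your proof of the \emph{if} direction inherits the gap rather than closing it.
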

\begin{observation}
If $(p_1, p_2, \ldots, p_{\mid S \mid})$ denotes the list of sorted positions of the numbers in a subset $S \subseteq R$, then the node $v(S)$ of the {\em metadata structure} $G$ has exactly $(p_1-1)$ mandatory incremental one shift children. 
\end{observation}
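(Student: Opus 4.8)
The plan is to reduce the count to an enumeration of admissible new positions by invoking Observation~\ref{obs:diff}, and then to set up a bijection between the mandatory incremental one shift children of $v(S)$ and the integer interval $\{1, 2, \ldots, p_1 - 1\}$.

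First I would recall that, by Observation~\ref{obs:diff}, a subset $S'$ is a mandatory incremental one shift of $S$ exactly when $S'$ is obtained from $S$ by inserting a single new position $q := p'_j$ satisfying $q \notin (p_1, \ldots, p_{\mid S \mid})$ (condition~(i)) and $q < p_1$ (condition~(ii)); condition~(iii) is merely the arithmetic restatement that $S'$ and $S$ differ by precisely the one added element $q$. Since all positions lie in $P = \{1, \ldots, n\}$, the requirement $q < p_1$ already confines $q$ to $\{1, \ldots, p_1 - 1\}$, and any such $q$ is automatically distinct from every $p_i$ because $p_i \ge p_1 > q$; thus condition~(i) is subsumed by condition~(ii).

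Next I would define the map $q \mapsto S_q := S \cup \{q\}$ (with its positions re-sorted) from $\{1, \ldots, p_1 - 1\}$ to the set of children of $v(S)$, and show it is a bijection. Injectivity is immediate, since distinct inserted positions produce distinct subsets. Surjectivity follows from the characterization above: every mandatory incremental one shift child of $v(S)$ is of the form $S \cup \{q\}$ for a unique $q$ in the interval. Counting then gives $\mid \{1, \ldots, p_1 - 1\} \mid = p_1 - 1$ children, as claimed; consistently, when $p_1 = 1$ the interval is empty and $v(S)$ has no such children.

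The only point that warrants a moment's care---more a sanity check than a genuine obstacle---is confirming that each candidate $S_q$ really has $S$ as its \emph{mandatory} incremental parent rather than some lexicographically larger preimage. This is precisely what condition~(ii) of Observation~\ref{obs:diff} encodes: forcing the inserted element to be smaller than every existing position of $S$ makes $S$ the lexicographically largest subset of which $S_q$ is an incremental one shift, so the mandatory designation is automatic and no additional argument is required.
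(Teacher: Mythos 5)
Your proof is correct and takes essentially the same route as the paper: the paper states this observation as following directly from the definition of mandatory incremental one shift (via the characterization in Observation~\ref{obs:diff}), which is exactly what you formalize. Your bijection between the admissible inserted positions $q \in \{1, \ldots, p_1 - 1\}$ and the children of $v(S)$, together with the remark that condition~(ii) forces $S$ to be the lexicographically largest incremental parent, is the paper's implicit argument spelled out in full.
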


The $2^{nd}$ property from Observation~\ref{obs:diff} can be easily verified from Figure~\ref{fig:3}. The nodes containing the subsets {\tt 1010} and {\tt 0110} are the children of the node containing the subset {\tt 0010} which means the subsets {\tt 1010} and {\tt 0110} can be obtained from the subset {\tt 0010} by mandatory incremental one shift. This in fact leads us to the next lemma.

The rationale behind refining the definition of shift in steps is to make the graph $G$ more sparse without disturbing the inherent topological ordering, since the complexity of the algorithm directly depends on the number of edges that $G$ contains. So a last enhancement is further made on the graph $G$ by defining another type of shift below.

\subsection{Modified {\em metadata structure} $G$ -- version II}
Note that many nodes in the DAG $G$ still have high out degrees due to multiple \emph{mandatory incremental one shift} edges. Specially, the bottom most node in each $G_i$ (except $G_n$) has $n-i$ \emph{mandatory incremental one shift} edges. We can remove most of these incremental edges to decrease the number of edges in the DAG and hence its complexity by keeping at most one outgoing incremental edge from each node by redefining the definition of \emph{mandatory incremental one shift}.

\begin{definition}[Modified Mandatory Incremental One Shift]
$S'$ is said to be a \emph{Modified Mandatory Incremental One Shift} of $S$, if (i) $S'$ is a mandatory incremental one shift of $S$, and (ii) among all those subsets of $R$, which are mandatory incremental one shifts of $S$, $S'$ is the one whose position sequence is lexicographically the smallest (equivalently, the $n$-bit string representation is lexicographically the largest).
\end{definition}

\begin{figure}[t]
\begin{center}
\scalebox{.45}{\includegraphics{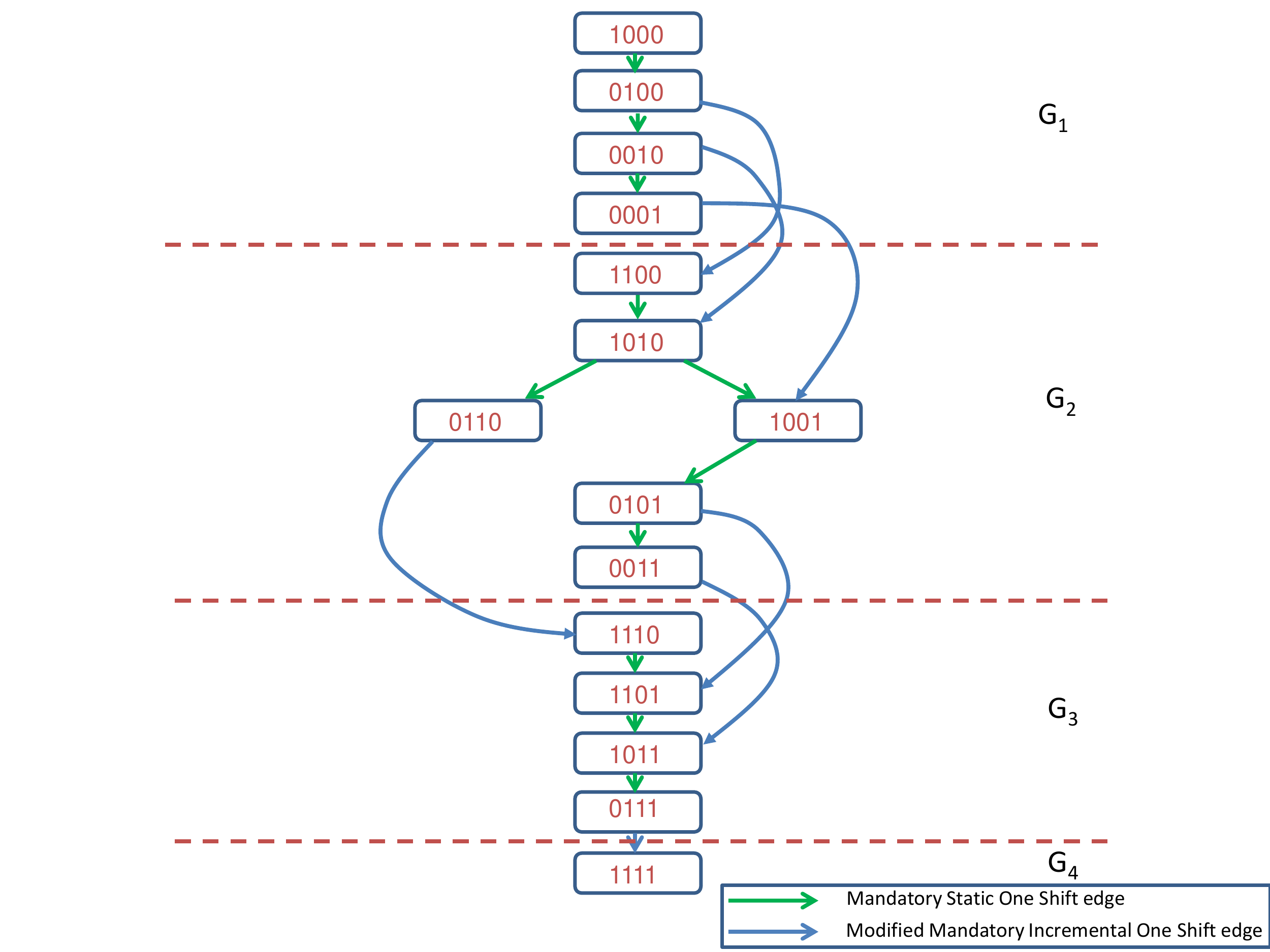}}
\end{center}
\caption{The model {\em metadata structure} $G$ for $n=4$ with \emph{modified mandatory incremental one shift}}
\label{fig:4}
\end{figure}

The {\em metadata structure} $G$, we have, after applying \emph{modified mandatory incremental one shift}, has the following properties:
\begin{enumerate}
\item Each valid subset can be reached from the root.
\item A node $v$ of $G$, now has at most one outgoing edge of incremental type and in total it has at most two children (one by modified mandatory incremental one shift and another by static one shift).
\item The root of $G_1$ has no parent. Every other node of $G_1$ has a unique parent. The roots of the other data structures $G_2$ to $G_n$ has exactly one parent (by modified mandatory incremental one shift) and all the other nodes have at most two parents (by modified mandatory incremental one shift and static one shift).
\item The bit-pattern corresponding to every child of a node can be deduced from the bit-pattern corresponding to that node.
\end{enumerate}

Consider Figure~\ref{fig:4} for the modified DAG $G$ for $n = 4$. Note that the subsets {\tt 1001}, {\tt 0101}, and {\tt 0011} are all \emph{mandatory incremental one shifts} of {\tt 0001} but according to the definition, only {\tt 1001} is the \emph{modified mandatory incremental one shift} of {\tt 0001}.

The definition of modified mandatory incremental one shift easily leads to the following observation.
\begin{observation}
If $(p_1, p_2, \ldots, p_{\mid S \mid})$ denotes the list of sorted positions of the ${\mid S \mid}$ numbers in a subset $S \subseteq R$, then the node $v(S)$ of the {\em metadata structure} $G$ has only one modified mandatory incremental one shift children, where $p_1>1$. 
\end{observation}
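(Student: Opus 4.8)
The plan is to prove the claim by directly unwinding the three nested definitions and reducing the lexicographic selection in the definition of the \emph{modified mandatory incremental one shift} to a comparison of single integers. First, I would pin down exactly which subsets are the \emph{mandatory} incremental one shifts of $S$. Combining Observation~\ref{obs:diff} with the preceding observation that $v(S)$ has exactly $(p_1-1)$ mandatory incremental one shift children, the mandatory incremental one shifts of $S$ are precisely
\[
\{\, S \cup \{q\} : 1 \le q \le p_1 - 1 \,\},
\]
each obtained from $S$ by inserting a single new position $q$ strictly smaller than $p_1$. When $p_1 > 1$ this family is non-empty, containing $p_1 - 1 \ge 1$ subsets.

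Next, I would record the sorted position sequence of each candidate. Since $q < p_1 \le p_i$ for every $i$, the inserted element $q$ lands at the front, so the position sequence of $S\cup\{q\}$ is $(q, p_1, p_2, \ldots, p_{\mid S\mid})$. The key structural point is that these sequences are identical in every coordinate except the first, where the value is exactly $q$.

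Then I would apply the selection rule of the \emph{modified mandatory incremental one shift}: among all these candidates, choose the one whose position sequence is lexicographically smallest. Because the sequences share the common tail $(p_1, \ldots, p_{\mid S\mid})$ and differ only in their leading entry $q$, the lexicographic comparison collapses to comparing the integers $q$, whose minimum over the range $1 \le q \le p_1 - 1$ is attained uniquely at $q = 1$. Hence the unique modified mandatory incremental one shift child is $S \cup \{1\}$, which is well defined precisely because $p_1 > 1$ forces $1 \notin S$; when $p_1 = 1$ the candidate family above is empty, which is exactly why the statement is qualified by $p_1 > 1$.

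The main obstacle is essentially notational rather than mathematical: one must check carefully that inserting $q < p_1$ really does place it at the front of the sorted sequence, and that the stated equivalence between ``lexicographically smallest position sequence'' and ``lexicographically largest $n$-bit string'' is applied consistently (both single out $q=1$, since setting the leftmost bit yields the largest bit string). Once it is observed that all candidate sequences agree outside their first coordinate, uniqueness of the minimiser is immediate and no further case analysis is required.
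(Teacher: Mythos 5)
Your proof is correct and is exactly the argument the paper intends: the paper states this observation without proof (as following ``easily'' from the definition), and your direct unwinding---identifying the mandatory incremental one shifts of $S$ as $\{S \cup \{q\} : 1 \le q \le p_1-1\}$ via Observation~\ref{obs:diff}, then collapsing the lexicographic minimization to choosing $q=1$---is the natural filling-in of those details. Your conclusion that the unique child is $S \cup \{1\}$ (set the leftmost bit) is also consistent with the paper's own example (only \texttt{1001} is the modified mandatory incremental one shift of \texttt{0001}) and with how this fact is later used in the proof of Lemma~\ref{lemma_o(1)}.
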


Clearly, after the introduction of \emph{modified mandatory incremental one shift}, any node $v \in V(G)$ has now at most one outgoing incremental edge. However, having multiple outgoing edges is not the only issue that affects the run-time complexity, having multiple incoming edges also does so.

Let us consider a node $v(S_c)$ in $G$ that has two parents $v(S_{p1})$ and $v(S_{p2})$, where $S_c$ is the modified mandatory incremental one shift of $S_{p1}$ and also the mandatory static one shift of $S_{p2}$ respectively and obviously, $S_c, S_{p1}, S_{p2} \subseteq R$. Clearly, both $S_{p1}$ and $S_{p2}$ will occupy higher ranks than $S_c$ in the desired top-$k$ result. So, the subset $S_c$ can never be reported prior to $S_{p1}$ and $S_{p2}$. Also note that the subset $S_c$ is inserted in the heap just after one of its parents is reported (as well as deleted) from the heap as part of the desired result. It will stay in the heap at least till its other parent is reported (as well as deleted) from the heap. The problem of duplication arises exactly when the second reporting happens and calls for a reinsertion of the subset $S_c$ again in the heap. So, at that time, either it is required to reinsert it or execute a routine to check whether $S_c$ is already there in the heap and hence, the current implementation also suffers from the problem of `checking for node duplication'.

For example, consider Figure~\ref{fig:3}, where {\tt 1001} has two parents {\tt 0001} (by modified mandatory incremental one shift) and {\tt 1010} (by static one shift). Without loss of generality, if the parent {\tt 0001} is reported first as part of desired top-$k$ subsets then as its child node, {\tt 1001} will be inserted into the heap and it will stay in heap till its next parent {\tt 1010} is reported. It happens so, as parents are always ranked higher than the child in the DAG $G$. But when the parent {\tt 1010} is reported, as its child node, {\tt 1001} is again supposed to be added in the heap following the reporting logic using the heap. So if added without checking, it would have caused multiple insertion of the same node thereby causing unnecessary increase in runtime. The other option is to check for node duplication before inserting the child node, which would also cause an increase in the run-time complexity. 
However, it is obvious that the problem of node duplication can be removed altogether if every node in $G$ has only one parent.

\subsection{The final structure of $G$}

To summarize, let us recollect that the complete {\em metadata structure} $G$ is a collection of $n$ {\em local metadata structures} $G_1$ to $G_n$ where each {\em local metadata structure} $G_i$ is capable of reporting all the top-$k$ subsets of size $i$, efficiently. Each node of $G_i$ is reachable from its root and it is possible to use a heap $H_i$ to report these subsets of size $i$. Now, in order to improve the time complexity, it is needed to remove the node duplication problem altogether, i.e., we want that each node $v$ of $G$ to have only one parent. However, it must also be ensured that each valid subset can be reached from the root of $G$ (which is also the root of $G_1$) and also the subsets corresponding to all the children of a node $v$ can be easily deduced from the subset corresponding to $v$.
\begin{figure}
\begin{center}
\scalebox{.45}{\includegraphics{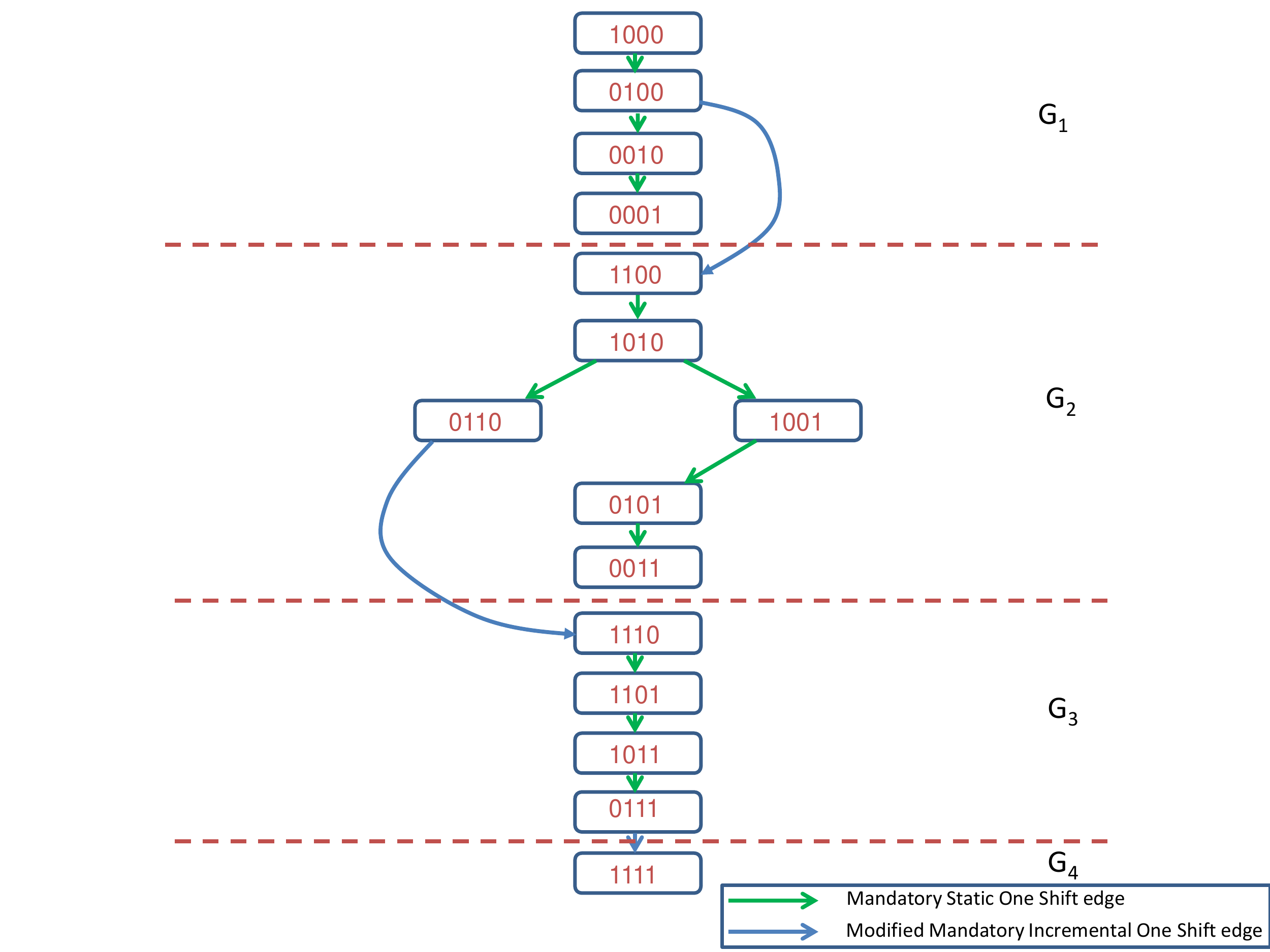}}
\end{center}
\caption{The model {\em metadata structure} $G$ for $n=4$ with one parent for each node}
\label{fig:5}
\end{figure}

In the current design of $G$, the root of $G_1$ has no parent. Other nodes of $G_1$ has one parent. The roots of other data structures $G_2$ to $G_n$ has one parent (by modified mandatory incremental one shift) and all other nodes have at least one parent and at most two parents (one by static one shift and the other possibly by modified mandatory incremental one shift). The desired goal of having every node (other than the root of $G$) with only one parent can be achieved, simply by omitting all incremental edges from $G$ except the ones that lead to the roots of $G_i$ for each subset size $i \in \{2, \ldots, n \}, \mid R \mid = n$. Deleting these edges will reduce the graph in Figure~\ref{fig:4} to that of Figure~\ref{fig:5}, which gives us the final DAG $G$ for $n=4$. 

The {\em metadata structure} $G$, we have, after applying these last set of modifications, has the following properties:

\begin{enumerate}
\item Each valid subset can be reached from the root.
\item Every node of $G$ other than the root has only one parent. The root nodes of $G_2$ to $G_n$ are connected to their respective parents by a modified mandatory incremental one-shift edge and all other nodes are connected to their corresponding parents by a mandatory static one-shift edge.
\item The bit-pattern corresponding to every child of a node can be deduced from the bit-pattern corresponding to the node in constant time (see Lemma~\ref{lemma_o(1)}).
\end{enumerate}

\begin{lemma}
\label{lemma_two_child}
Every node of this final DAG $G$ has at most two children.
\end{lemma}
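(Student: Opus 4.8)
The plan is to bound the out-degree of an arbitrary node $v(S)$ by splitting its outgoing edges into the two families that survive in the final $G$: the mandatory static one-shift edges (which live inside a single $G_i$) and the incremental edges that were retained (those entering the roots of $G_2,\ldots,G_n$). I would bound each family separately and then argue that their two worst cases are mutually exclusive, so the totals never add up to more than two.

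First I would handle the static children. As already recorded in the text following~\cite{smhg_tcs20}, a node has at most two mandatory static one-shift children, but I need a slightly sharper form. Writing $P_S=(p_1,\ldots,p_{\mid S\mid})$ and letting $t$ be the length of the maximal initial run $p_1=1,p_2=2,\ldots,p_t=t$, a static shift at coordinate $j$ produces a mandatory child only for $j\in\{t,t+1\}$: for $j<t$ the shift collides with $p_{j+1}=j+1$, and for $j>t+1$ the coordinate $t+1$ of the shifted set still admits a decrement, so $S$ is not the lexicographically smallest static parent of the result and the edge is not mandatory. In particular, whenever $p_1\neq 1$ we have $t=0$, leaving only the candidate $j=1$, so $S$ has at most one static child.

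Next I would count the retained incremental children. By construction the only incremental edges left in the final $G$ are those entering the roots of $G_2,\ldots,G_n$, and the root of $G_i$ is the node with position list $(1,2,\ldots,i)$. By Observation~\ref{obs:diff}(ii) the element added along an incremental edge lies below the minimum of the parent, so the incremental parent of this root is obtained by deleting its smallest element $1$; that parent is exactly $(2,3,\ldots,i)$. Hence $v(S)$ emits a retained incremental edge precisely when $S=(2,3,\ldots,i)$ for some $i$, in which case $i=\mid S\mid+1$ is forced and the edge is unique. Therefore every node has at most one incremental child.

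The hard part is to prevent these two bounds from stacking to three, and this is exactly where the sharper static statement is used. A node can have two static children only if its initial run is nonempty, that is $p_1=1$; a node can have a retained incremental child only if $S=(2,3,\ldots,i)$, that is $p_1=2$. Since $p_1$ cannot be both $1$ and $2$, the two maxima are never realized together. Splitting on this value finishes the argument: if $p_1=1$ then $S$ has no retained incremental child and at most two static children; if $p_1\neq 1$ then $S$ has at most one static child and at most one incremental child. In either case the out-degree of $v(S)$ is at most two, as claimed.
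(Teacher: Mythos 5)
Your proof is correct, and at the top level it follows the same decomposition as the paper's own argument: split the out-degree of a node into mandatory static one-shift children and retained incremental children, and show that the worst cases of the two families cannot occur at the same node. Where you differ is in how the static bound is established. The paper cites~\cite{smhg_tcs20} for the fact that any node has at most two mandatory static children, and then handles the incremental-emitting nodes (those with bit-pattern of the form {\tt 0}$1^b${\tt 0}$\cdots${\tt 0}) by direct inspection of these special patterns: if $b=1$ there is exactly one static child, and if $b>1$ there is none, because the would-be child's mandatory static parent is a different node. You instead derive, from the definition alone, a sharper and self-contained characterization -- a static shift at coordinate $j$ is mandatory only for $j\in\{t,t+1\}$, where $t$ is the length of the initial run of positions $1,2,\ldots,t$ -- which simultaneously yields the global two-child bound, the one-child bound for nodes with $p_1\neq 1$, and, combined with your identification of the retained incremental emitters as exactly the nodes $(2,3,\ldots,i)$ (which have $p_1=2$), an explicit mutual-exclusivity argument via the value of $p_1$. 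Your route buys independence from the cited prior work and a reusable structural lemma about where mandatory static children can arise; the paper's route buys brevity, at the cost of leaning on~\cite{smhg_tcs20} and on a case check of the particular bit patterns involved.
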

\begin{proof}
At most two edges of type mandatory static one-shift can go out any node~\cite{smhg_tcs20}. Also, it is our claim that any node from where a modified incremental one-shift edge can originate can have at most one mandatory static child. This is because such a node has a bit-pattern that has the 1st bit as '0' immediately followed by only one contiguous block b of '1's (say a subset {\tt 0111000}, where $n = 7$). Only in one case, where the length of this block b is just 1 (the node being {\tt 0100000}, $n = 7$), we can have a mandatory static child ({\tt 00100000}). However, for all other cases, the nodes (like {\tt 0110000}, {\tt 0111000}, {\tt 0111100} and so on) can not have any mandatory static child in G~\cite{smhg_tcs20}. For example, {\tt 01110000} cannot have {\tt 01101000} as its child, since its parent is {\tt 10101000} by the definition of mandatory static one-shift.

Hence the proof.
\end{proof}

\subsection{Generation of top-$k$ subsets by traversal of implicit DAG $G$ on demand}
\label{G:ondemand}

Note that instead of creating this final {\em metadata structure} $G$ as a part of preprocessing, only its required portions can be constructed on demand, more importantly the edges of this DAG $G$ are not even needed to be explicitly connected. The real fact is portion of the graph that is explored just remains implicit among the nodes that are inserted into $H$. Here, we create and evaluate a subset corresponding to a node only if its parent node $u$ in the implicit DAG gets extracted. It saves considerable storage space as well as improves the run-time complexity. See Algorithm~\ref{algo2} for a detailed pseudo-code showing how the whole algorithm can be implemented. Lemma~\ref{lemma_o(1)} in turn establishes the fact that the algorithm really performs as intended.

\begin{algorithm}

\caption{Top-$k$\_Subsets\_On\_Demand($R[1 \ldots n]$, $k$)$\newline$ 
Structure of a node of the implicit DAG: bit-pattern $B[1 \ldots n]$, subsetSize: $ls$, aggregation-Value: $F$, tuple of three array indices $(p_1, p_2, p_3)$}
\label{algo2}
\begin{algorithmic}[.9]
\State Create an empty min-heap $H$ and an empty max-heap $M$;
\State Sort the $n$ real numbers of $R$ in non-decreasing order;$\newline$
\Comment{Create the root node $Root$ of $G$}
\State $Root.B[1] \leftarrow1$;
\For{$i \leftarrow 2$ to $\mid R \mid$}
\State $Root.B[i] \leftarrow 0$;
\EndFor
\State $Root.(p_1, p_2, p_3) \leftarrow$ $(0, 1, 1)$;
\State $Root.ls \leftarrow1$;
\State $Root.F \leftarrow R[1]$;
\State insert Root in $H$ as well as $M$; 
\State $count \leftarrow 1$;
\For{$q \leftarrow 1$ to $k$}
\State $currentNode \leftarrow$ extractMin($H$); 
\State $count \leftarrow count$ $-$ $1$;
\State \parbox[t]{\dimexpr\linewidth-\algorithmicindent}{Output the subset for bit-pattern $currentNode.B[1 \ldots n]$ and also its value $currentNode.F$ as the $q^{th}$ best subset; \strut}$\newline$ 
\Comment{Get one mandatory static one-shift child (if any)}
\If{$(childSType1 \leftarrow \Call{genManStaticType1}{currentNode})$}
\State insert $childSType1$ in $H$ as well as $M$;
\State $count \leftarrow count$ + $1$;
\EndIf
$\newline$
\Comment{Get the other mandatory static one-shift child (if any)}
\If{$(childSType2 \leftarrow \Call{genManStaticType2}{currentNode})$}
\State insert $childSType2$ in $H$ as well as $M$;
\State $count \leftarrow count$ + $1$;
\EndIf
$\newline$
\Comment{Get mandatory incremental static one-shift child (if any)}
\If{$(childMI \leftarrow \Call{genManIncremental}{currentNode})$}
\State insert $childMI$ in $H$ as well as $M$;
\State $count \leftarrow count$ + $1$;
\EndIf
$\newline$
\Comment{Remove the maximum element from both heaps if required, as we need to output only $k$ subsets; also removing more than one node at a time is never required since any parent node can have at most 2 children}
\If{$count > k - q$}
\State $extraNode \leftarrow extractMax(M)$;
\State remove $extraNode$ also from $H$;
\State $count \leftarrow count$ $-$ $1$;
\EndIf
\EndFor
\end{algorithmic}
\end{algorithm}

\begin{algorithm}
\begin{algorithmic}[1]
\Function{genManStaticType1}{node ParentNode}
\State node $childNode \leftarrow NULL$;
\If{$ParentNode.p_1 > 1$ and $ParentNode.p_1 < n$}

\If{$ParentNode.B[ParentNode.p_1$ $+$ $1] = 0$}
\State $childNode \leftarrow ParentNode$; \Comment{copy ParentNode into childNode}
\State $childNode.B[childNode.p_1] \leftarrow 0$;
\State $childNode.B[childNode.p_1$ $+$ $1] \leftarrow 1$;
\State $childNode. p_1 \leftarrow childNode. p_1$ $+$ $1$;
\If{$childNode.p_3 = childNode.p_1$ $-$ $1$}
\State $childNode.p_3 \leftarrow childNode.p_3$ $+$ $1$;
\EndIf
\State $childNode.F \leftarrow$ $childNode.F - R[childNode.p_1$ $-$ $1]$;
\State $childNode.F \leftarrow$ $childNode.F$ $+$ $R[childNode.p_1]$;

\EndIf
\EndIf
\State \Return $childNode$;
\EndFunction
%\newline
%//Get other mandatory static one-shift child (if any)

%\algstore{myalg}
\end{algorithmic}
\end{algorithm}

\begin{algorithm}
\begin{algorithmic}[1]
\Function{genManStaticType2}{node ParentNode}
\State node $childNode \leftarrow NULL$;
\If{$ParentNode.p_2 > 0$ and $ParentNode.p_2 < n$}
\State $childNode \leftarrow ParentNode$; \Comment{copy ParentNode into childNode}
\State \textbf{Swap}($childNode.B[childNode.p_2], childNode.B[childNode.p_2 + 1]$);
\State $childNode.p_1 \leftarrow childNode.p_2$ $+$ $1$; \Comment{$p_2$ now points to leftmost $0$}
\If{$childNode.p_3 = childNode.p_2$}
\State $childNode.p_3 \leftarrow childNode.p_2$ $+$ $1$;
\EndIf
\State $childNode.p_2 \leftarrow childNode.p_2$ $-$ $1$;
\State $childNode.F \leftarrow$ $childNode.F$ $-$ $R[childNode.p_2$ $+$ $1]$;
\State $childNode.F \leftarrow$ $childNode.F$ $+$ $R[childNode.p_2$ $+$ $2]$;

\EndIf
\State \Return $childNode$;
\EndFunction
%\newline
%//Get other mandatory static one-shift child (if any)

%\algstore{myalg}
\end{algorithmic}
\end{algorithm}

\begin{algorithm}
\begin{algorithmic}[1]
\Function{genManIncremental}{node ParentNode}
\State node $childNode \leftarrow NULL$;
\If{$ParentNode.p_1=2$ and $ParentNode.p_2 =0$ and $\newline$ \phantom{xx} $ParentNode.p_3= ParentNode.ls$ $+$ $1$}
\State $childNode \leftarrow ParentNode$; \Comment{copy ParentNode into childNode}
\State $childNode.B[1] \leftarrow 1$;

\State $childNode.p_1 \leftarrow 0$;
\State $childNode.p_2 \leftarrow childNode.p_3$;
\State $childNode.ls \leftarrow childNode.ls$ $+$ $1$;
\State $childNode.F \leftarrow$ $childNode.F$ $+$ $R[1]$;
\EndIf
\State \Return $childNode$;
\EndFunction
%\newline
%//Get other mandatory static one-shift child (if any)

%\algstore{myalg}
\end{algorithmic}
\end{algorithm}

Note that apart from the root of the {\em metadata structure}, each node can be created in $O(1)$ incremental time. Since we create the subset corresponding to node $v$ only when its parent $u$ in the implicit DAG is extracted, and there is only a difference of $O(1)$ bits between $u$ and $v$, the creation as well as evaluation of the subset for $v$ can be done in $O(1)$ incremental time, if the subset of $u$ is known.

\begin{lemma}
\label{lemma_o(1)}
Given any bit-pattern of a subset and its corresponding value, the bit-pattern of the subset and the corresponding value for any of its child can be generated in $O(1)$ incremental time. Also the type of edges that come out of its corresponding node can be determined in $O(1)$ time.
\end{lemma}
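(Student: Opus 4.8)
The plan is to read off everything the algorithm needs---which children exist, their bit-patterns, and their aggregation values---directly from the three stored indices $(p_1,p_2,p_3)$, so that no cell of $B[1\ldots n]$ ever has to be scanned. By Lemma~\ref{lemma_two_child} a node has at most two children, and each child is produced by exactly one of \textsc{genManStaticType1}, \textsc{genManStaticType2}, \textsc{genManIncremental}. Since a static one-shift moves a single $1$ one place to the right (two bit flips, so $F$ changes by $-R[j]+R[j+1]$) and a modified mandatory incremental one-shift only sets $B[1]\leftarrow 1$ (one bit flip, $F$ changes by $+R[1]$), parent and child always differ in $O(1)$ bits and $F$ differs by $O(1)$ terms. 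I therefore take the \emph{structural} facts (which edges exist, established by the mandatory-shift definitions together with Lemma~\ref{lemma_two_child}) as given, and prove the purely computational claim: (i) the positions of the $O(1)$ flips are available from $(p_1,p_2,p_3)$ in constant time, (ii) the guard of each procedure is an $O(1)$ comparison of these indices, and (iii) each procedure restores the indices for the child in constant time.

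First I would fix the invariant that the triple maintains for every node with bit-pattern $B$: $p_3$ is the position of the rightmost $1$; $p_2$ is the position of the rightmost $1$ of the maximal run of $1$'s beginning at position $1$ (with $p_2=0$ when $B[1]=0$); and $p_1$ is the position of the leftmost $1$ lying strictly to the right of that leading run (with $p_1=0$ when $B$ has the solid form $1^{p_2}0^{\,n-p_2}$). The root $1\,0^{\,n-1}$ with $(p_1,p_2,p_3)=(0,1,1)$ plainly satisfies it, giving the base case of an induction that runs along the unique-parent edges of the final $G$.

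Next I would verify the inductive step procedure by procedure. For \textsc{genManStaticType1} the $1$ at position $p_1$ is moved into the gap at $p_1+1$ exactly when $1<p_1<n$ and $B[p_1+1]=0$; the only index work is $p_1\leftarrow p_1+1$ plus the correction $p_3\leftarrow p_3+1$ fired precisely when the moved $1$ was the rightmost one (the test $p_3=p_1-1$). For \textsc{genManStaticType2} the rightmost $1$ of the leading run, at $p_2$, is moved to $p_2+1$, whose emptiness is forced by the definition of ``leading run''; the indices become $p_1\leftarrow p_2+1$ and $p_2\leftarrow p_2-1$, with the analogous $p_3$ correction when $B$ was solid (the test $p_3=p_2$). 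For \textsc{genManIncremental} the guard $p_1=2,\ p_2=0,\ p_3=ls+1$ is exactly the assertion that $B$ has the shape $0\,1^{ls}0^{\,n-ls-1}$, i.e.\ the only form from which a surviving incremental edge leaves (cf.\ Lemma~\ref{lemma_two_child}); setting $B[1]\leftarrow 1$ turns the indices into $p_1\leftarrow 0,\ p_2\leftarrow p_3,\ ls\leftarrow ls+1$. In every case the update touches only a constant number of array cells, which establishes the $O(1)$ incremental generation of the bit-pattern (recorded as the $O(1)$ changed bits superimposed on the inherited pattern) and of $F$; and because the three guards are themselves constant-time comparisons of $(p_1,p_2,p_3,ls)$, evaluating all three simultaneously determines the set, and the types, of outgoing edges in $O(1)$.

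The step I expect to be the real work is (iii): proving that the conditional index updates re-establish the invariant in \emph{every} configuration, not merely in the instances drawn in Figures~\ref{fig:4}--\ref{fig:5}. The delicate cases are exactly those in which the moved $1$ sits on a tracked boundary---when it coincides with the current rightmost $1$ (so $p_3$ must advance) or when the pattern is a solid leading block (so a type-$2$ move simultaneously shortens the leading run and pushes the rightmost $1$)---together with the split between an empty and a nonempty leading run. Each of these must be checked against the definitions of the leading run and of ``leftmost $1$ beyond it'' to confirm that $p_1$, $p_2$, and $p_3$ still point where the invariant demands. This is a finite but genuinely exhaustive case analysis; once it is complete, the constant-time guarantees in (i) and (ii) follow at once, and the lemma is proved.
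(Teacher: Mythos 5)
Your proposal is correct and follows essentially the same route as the paper's own proof: the same pointer triple $(p_1,p_2,p_3)$ with the same semantics, the same three child-generation procedures with identical guards and index updates, and the same observation that $F$ changes by one addition and at most one subtraction. Your explicit framing as an invariant maintained inductively along the unique-parent edges (and your flagging of the boundary cases where the moved $1$ coincides with $p_3$ or the leading block is solid) is merely a more formal packaging of what the paper asserts descriptively.
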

\begin{proof}
Note that in the implicit DAG $G$, the bit-pattern $B[1 \ldots n]$, stored in any node $u$ represents one non-empty subset $S\subseteq R$. Depending on the bit-pattern stored in the node $u$, we create the bit-pattern of its child node $v$ as well as evaluate its value. Moreover, from this bit-pattern we can also determine the type of edge that would implicitly connect $u$ and $v$. 

Note that $v$ is a mandatory static child of $u$ only if one of the following holds~\cite{smhg_tcs20}:

\begin{enumerate}
\item The bit-pattern of $v$ can be generated from the bit-pattern of $u$ by swapping the {\tt 1} that immediately appears after the first sequence of {\tt 0s} in the bit-pattern of $u$; moreover, the right neighbour of this {\tt 1} must be a {\tt 0} for the swapping to be valid. It does not matter whether the bit-pattern stored in $u$ starts with {\tt 0} or {\tt 1}. For example, if $u$ has a bit-pattern like {\tt 00010011000} then $v$ will have the pattern {\tt 00001011000} or if $u$ has the pattern {\tt 1100101101} then $v$ has the pattern {\tt 1100011101}.

\item The bit-pattern of $u$ starts with {\tt 1}. Then the bit-pattern of $v$ can be generated from the bit-pattern of $u$ by swapping the rightmost {\tt 1} appearing in the first contiguous block of {\tt 1s} with its neighbouring {\tt 0}. For example, the following two bit patterns {\tt 11100011} and {\tt 11010011} can be possible candidates for the nodes $u$ and $v$ respectively.
\end{enumerate}

Moreover, it is also easy to verify that a node $u$ can have a modified mandatory incremental child of $u$ if and only if the bit-pattern of $u$ starts with single $0$, immediately followed by only one contiguous block b of {\tt 1s}, and the bit-pattern of $v$ can be generated from $u$ by just replacing this first {\tt 0} by {\tt 1}.

In order to generate the bit-patterns of the children efficiently, in any node $u$ corresponding to the subset $S$, a tuple of three pointers $(p_1, p_2, p_3)$ is maintained, and also subset size $\mid S \mid$ in addition to its bit-pattern and the aggregation value. The first pointer $p_1$ points to the position of first {\tt 1} that immediately appears after the first block of {\tt 0(s)} in $u$, the pointer $p_2$ points to the position of the rightmost {\tt 1} appearing in the first contiguous block of {\tt 1s} in case the leftmost bit of $u$ is a {\tt 1}, otherwise it remains {\tt 0} and the pointer $p_3$ points to the rightmost {\tt 1} in the bit-pattern of $u$. Note that $p_3$ can never be {\tt 0}. It is always in $[1 \ldots n$]. Also, note that $p_1$ can never take the value {\tt 1}, it can vary between $[2 \ldots n]$.

For the root node of $G$ (the root node of local DAG $G_1$), it is easy to verify that $p_1=0$ (i.e. not pointing to any position of the bit-pattern), $p_2$ = $\mid S \mid$, and $p_3$ = $\mid S \mid$, i.e., $p_2=1$, and $p_3=1$, since for this node clearly the size of the subset $S$ is {\tt 1}. Note that for the root node, the bit-pattern starts with a single {\tt 1} followed by $(n -1)$ {\tt 0s} and setting up the bit-pattern requires $O(\mid R \mid) = O(n)$ time. The three pointers and the value (which is basically the least element in the set $R$) can be set in $O(1)$ additional time.

It may be recalled here that any portion of the DAG $G$ is never created explicitly, rather whenever we extract and report a node $u$ from the heap, we create each of its child nodes but do not connect it to $u$ with an edge to form the DAG explicitly, and also evaluate its value so that it can be inserted in the heap $H$ for further manipulation.

Now, after extracting a node $u$ from $H$, the following steps are required to be performed :

\begin{enumerate}
\item Check if the value of $p_1$ is in $[2 .. n]$ and additionally if $B[p_1 + 1]=0$. If so, then from $u$ we get a mandatory static child node $v$, just by swapping $B[p_1]$ and $B[p_1+1]$. Now, we just have to adjust the pointers of $v$. Set $p_1$(v) = $p_1$(u) + 1. $p_2$ will remain unchanged, i.e., $p_2(v)$ = $p_2(u)$. If in $u$, $p_3 = p_1$ , then also set $p_3(v) = p_3(u) + 1$, otherwise $p_3$ also remains unchanged.

\item Check the pointer $p_2$. If it is in $[1 .. (n$ $-$ $1)]$, then $B[p_2]$ and $B[p_2+1]$ are swapped in order to create the other mandatory static child node $v$ of $u$. Set $p_2(v)$ = $p_2(u) - 1$. If $p_3(u) = p_2(u)$, then set $p_3(v) = p_2(u) + 1$. Finally, set $p_1(v) = p_2(u) + 1$.

\item Check the pointer triplet $(p_1, p_2, p_3)$ for a specific list of values $(2, 0, \mid S\mid$ + 1) and if yes, set $B[1]=1$ in the bit-pattern of $u$ keeping all the other bits unchanged, to have the modified mandatory incremental child node $v$ of $u$. Here, set $p_1(v) = 0$, $p_2(v) = p_3(v) = p_3(u)$ and also increment subset size $\mid S \mid$ of $v$ by {\tt 1}.
\end{enumerate}

From the above discussion it is clear that given any bit-pattern of a subset stored in a node $u$, the bit-pattern of the subset for any of its child stored in the node $v$ can be generated in $O(1)$ incremental time. Also the type of edge that comes out of its corresponding node can be determined in $O(1)$ time.

Moreover, evaluation of the value of the bit-pattern stored in $v$ can be done from the value of the bit-pattern stored in $u$ in $O(1)$ time by just one addition and at most one subtraction.

Hence the proof. 
\end{proof} 

Here also, a \emph{min-max-heap} or a max-heap can be used in addition to the min-heap, so as to limit the number of candidates in $H$ to be at most $k$. This gives the following two lemmas.
\newline
\begin{lemma}
\label{lemma_space_pre}
The working space required for Algorithm~\ref{algo2} is $O(kn)$.
\end{lemma}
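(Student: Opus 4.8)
The plan is to charge all memory used by Algorithm~\ref{algo2} to two sources and bound each by $O(kn)$. The essential structural difference from Algorithm~\ref{algo1} is that here the DAG $G$ is never built: by the final one-parent design of $G$, a child is created only at the moment its unique parent is extracted from the heap, and edges are never stored. Consequently there is no explicit copy of $G$, and, because no node can be reached along two parents, there is no need for the auxiliary AVL tree / skip list that Algorithm~\ref{algo1} used to suppress duplicates. The only persistent structures are therefore the min-heap $H$ and the companion max-heap $M$, and it suffices to bound (i) how many nodes can reside in them at once and (ii) the space a single node occupies.

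First I would dispose of the per-node cost. As declared in the node layout of Algorithm~\ref{algo2}, a node stores the bit-pattern $B[1 \ldots n]$ together with the scalars $ls$, $F$ and the pointer triple $(p_1,p_2,p_3)$; the bit-pattern dominates and gives $O(n)$ space per node, with only $O(1)$ extra words. Since a node is inserted into $M$ exactly when it is inserted into $H$ and the two are always deleted together, $H$ and $M$ store the same set of nodes, so it is enough to bound $|H|$.

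The crux is then to show $|H| = O(k)$, and this is where I expect the only real work. The idea is to track the variable $count$, which equals the current number of resident nodes, against the shrinking threshold $k-q$ used in the removal test of iteration $q$. In one iteration the algorithm performs exactly one \texttt{extractMin} and then at most two insertions, so the net growth before the removal test is at most $+1$; the single \texttt{extractMax} fired when $count>k-q$ then discards the largest candidate, which can never enter the top-$k$. The subtle point --- and the main obstacle --- is to justify that a \emph{single} max-removal per iteration keeps $count$ within a constant of $k-q$: this rests precisely on Lemma~\ref{lemma_two_child}, since although the pseudocode contains three generator calls \texttt{genManStaticType1}, \texttt{genManStaticType2} and \texttt{genManIncremental}, at most two of them actually produce a child, capping the per-iteration increase. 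With the increase capped at one, $count$ never strays more than an additive constant above $k-q$, hence $count = O(k)$ throughout the run.

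Combining the two bounds, at every instant the heaps jointly hold $O(k)$ nodes, each of size $O(n)$, while the input array contributes only $O(n)$ and no other structure of comparable size exists. Therefore the total working space of Algorithm~\ref{algo2} is $O(kn)$, which is the claim.
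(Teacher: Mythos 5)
Your overall decomposition is the same as the paper's: bound the number of nodes simultaneously resident in the heaps by $O(k)$ and the footprint of a single node by $O(n)$ (bit-pattern plus $O(1)$ scalars), and note that $M$ mirrors $H$ and that no AVL tree is needed. The problem is in the step you yourself flag as the crux: the claim that one \emph{extractMax} per iteration keeps $count$ within an additive constant of $k-q$. That claim does not follow from Lemma~\ref{lemma_two_child}, and as a general inference it is false. Consider the adversarial case in which every extracted node yields two children. For $q \le (k-1)/2$ the removal test never fires (after the insertions, $count = q+1 \le k-q$), so $count$ climbs to roughly $k/2$; from then on the test fires in every iteration, but the single removal only cancels the net $+1$ growth (extract $-1$, insert $+2$, remove $-1$), so $count$ plateaus near $k/2$ while $k-q$ shrinks to $0$. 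At $q = k-1$ the gap between $count$ and $k-q$ is $\Theta(k)$, not $O(1)$; correspondingly, the induction you sketch cannot close with any fixed constant.

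Fortunately the error sits in an unnecessary detour, and your own earlier observation repairs it: since each iteration performs one extraction and, by Lemma~\ref{lemma_two_child}, at most two insertions, the net growth per iteration is at most $+1$, so after $q \le k$ iterations $count \le 1+q \le k+1 = O(k)$ with no appeal to the removal logic at all. This is essentially how the paper argues, stated globally: at most $1 + 2(k-1) = 2k-1$ nodes are ever inserted into $H$ during the entire run (the last extraction need not insert anything), so the heap size is bounded by $2k-1$ whether or not the min-max/extractMax pruning is used, and the pruning is only an optimization. With that substitution, your proof is correct and coincides with the paper's: $O(k)$ resident nodes times $O(n)$ bits per node gives $O(kn)$ working space.
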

\begin{proof}
By using a min-max-heap, the size of the heap $H$ can be restricted to $k$. Even without using it, the number of insertions possible in $H$ is bounded by $1 + 2(k - 1) = 2k - 1$ from above, since every extraction from $H$ can cause at most two insertions except the last one. Since each node contains $n$ bits, the total space required is $O(kn)$.
\end{proof}

\begin{lemma}
\label{algo2:time}
Apart from the time required to sort $R$, Algorithm~\ref{algo2} runs in $O (n+k\log_2 k)$ time. If it is required to report the subsets, it will take $O(nk + k\log_2 k)$ time. 
\end{lemma}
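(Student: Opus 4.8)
The plan is to bound the total work performed across the $k$ iterations of the main \textbf{for} loop of Algorithm~\ref{algo2}, splitting it into two separable parts: the heap-maintenance cost and the child-generation (plus output) cost. The root is created once in $O(n)$ time, after which each of the $k$ iterations performs exactly one \textsc{extractMin} on $H$, at most two insertions into $H$ and $M$ (by Lemma~\ref{lemma_two_child} every node has at most two children), and at most one \textsc{extractMax} together with the matching deletion from $H$. Hence the loop performs $O(k)$ heap operations in total.

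First I would pin down the per-operation heap cost. Using the max-heap $M$ together with the counter \emph{count} to evict the current maximum whenever more than $k-q$ candidates remain, the size of $H$ (and of $M$) is kept at most $k$ throughout; equivalently, as noted in the proof of Lemma~\ref{lemma_space_pre}, the number of insertions is bounded above by $2k-1$. Provided the heaps store \emph{handles} (pointers) to nodes and compare only the aggregation keys $F$, each \textsc{extractMin}, insertion, and \textsc{extractMax}/deletion touches $O(\log_2 k)$ handles and performs $O(\log_2 k)$ key comparisons, so it costs $O(\log_2 k)$. Summed over the $O(k)$ operations this contributes $O(k\log_2 k)$.

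Next I would account for child generation. For the version that reports only the subset \emph{sums}, a node need carry only the value $F$, the subset size $\mid S \mid$, and the pointer triple $(p_1, p_2, p_3)$, all of constant size. By Lemma~\ref{lemma_o(1)} the triple and the value of each child, together with the type of the connecting edge, are computed from those of the parent in $O(1)$ incremental time; hence each of the $O(k)$ children costs $O(1)$ and the aggregate child-generation cost is $O(k)$. Adding the $O(n)$ root setup and the $O(k\log_2 k)$ heap cost yields $O(n+k+k\log_2 k)=O(n+k\log_2 k)$.

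For the version that must also report the subsets, each heap node additionally stores its own $n$-bit pattern $B[1 \ldots n]$. Generating a child now requires copying and then $O(1)$-editing the parent's bit-pattern, costing $O(n)$ per child (Lemma~\ref{lemma_o(1)} guarantees only $O(1)$ bits change, but materializing an independent copy is still $O(n)$), and emitting the output subset for each of the $k$ reported nodes scans its $n$ bits in $O(n)$ time. Together these contribute $O(nk)$, while the heap cost stays $O(k\log_2 k)$ precisely because the heaps move only handles, never the $n$-bit payloads. The total is therefore $O(nk+k\log_2 k)$. The main obstacle is exactly this bookkeeping: one must (i) justify that the heaps can be held to size $O(k)$ so that a single heap step costs $O(\log_2 k)$ rather than a quantity logarithmic in the number of candidate subsets, and (ii) cleanly isolate the $O(n)$ bit-pattern handling so that it is charged only in the subset-reporting regime, leaving the $O(1)$ pointer-and-value update of Lemma~\ref{lemma_o(1)} to drive the faster sums-only bound.
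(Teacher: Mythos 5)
Your proof is correct and follows the same skeleton as the paper's: $O(n)$ to build the root, at most two children per extracted node (Lemma~\ref{lemma_two_child}), hence at most $2k-1$ insertions and $k$ extract-min operations, with the heaps held to size at most $k$ so that each operation costs $O(\log_2 k)$. The genuine difference is where the dependence on $n$ gets charged. The paper keeps the $n$-bit vector in every node in both regimes, invokes Lemma~\ref{lemma_o(1)} to claim that each child's bit-pattern and value are produced in $O(1)$ incremental time, and attributes the $O(nk)$ term solely to decoding the bit-string at each of the $k$ outputs. You instead observe that materializing an independent copy of an $n$-bit pattern for every inserted node costs $O(n)$ per child no matter how few bits differ, so you charge $O(nk)$ already at child creation in the reporting regime, and you obtain the $O(n+k\log_2 k)$ bound only under a constant-size node representation (value, subset size, pointer tuple) with the bit vector dropped. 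Your accounting is the more defensible one: Algorithm~\ref{algo2} as written copies $ParentNode$ into $childNode$, and its nodes occupy $\Theta(n)$ bits each (consistent with the $O(kn)$ working-space bound), so creating $\Theta(k)$ such nodes is $\Theta(nk)$ work; the first claim of the lemma therefore really holds for the representation you describe, which the paper itself only introduces later when it gets rid of the bit string (Lemma~\ref{algo_sum}). One small repair to your sums-only variant: the triple $(p_1,p_2,p_3)$ alone cannot decide whether the first static child exists, since the generation routine tests $B[p_1+1]=0$; as the paper notes in the section on removing the bit string, you also need one extra pointer to the first {\tt 1} after position $p_1$ --- a constant-size addition that leaves your $O(n+k\log_2 k)$ bound intact.
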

\begin{proof}
It takes $O(n)$ time to create the root node. Then each node $v$ in $G$ has at most two children, and the bit-pattern as well as the value $F(X)$ for a child $X$ can be computed in $O(1)$ time from the bit-pattern and the value of its parent. Also, at most $2k-1$ insertions and $k$ extract-min operations are performed on the min-max-heap $H$. Since the size of $H$ is bounded by $k$, the overall running time is $O(n+(2k - 1)\log_2 k + k\log_2 k) = O(n+k\log_2 k)$. However, if we have to report the elements of the subsets, then after each of the $k$ extractions, 
it is required to decode the bit-string, which would take $O(n)$ time. Hence, the total time required will be $O(nk + k\log_2 k)$.
\end{proof}

\subsection{Getting rid of the bit string}

Note that in lemma~\ref{lemma_o(1)}, it is already established that the bit string of a child node varies from its parent only at a constant number of places and also the subset sum of any child node can be generated from its parent node in $O(1)$ time, since it requires only a constant number additions and subtractions with the sum stored in the parent node.
Actually, with a slight modification, Algorithm~\ref{algo2} can be run even without the explicit storage of the bit-pattern at each node. Here it is required to maintain another pointer value that points to the position of the first {\tt 1} after the position pointed by pointer $p_1$. It can now be noted that the types of edges that come out from any node in $G$ can be determined in $O(1)$ time. Once the indices of the elements present in the subset corresponding to the parent node are known, it is possible to generate the array indices of the elements and from them the actual element(s) of each subset can be found using $O(1)$ incremental time for each node. In the following lemma it is shown that the running time improves to $O(k \log_2 k)$ and hence becomes independent of $n$. It also gets exhibited in the next section from the results of our implementation, where we see that the runtime falls drastically. Recall that the subset sum of any child node can also be generated from its parent node in $O(1)$ time. 
\begin{lemma}
\label{algo_sum}
The modified version of algorithm~\ref{algo2} that works without storing the bit stream runs in $O(k \log_2 k)$ time.
\end{lemma}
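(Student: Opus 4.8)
The plan is to show that, once the explicit bit-pattern is discarded, every per-step cost that previously scaled with $n$ collapses to $O(1)$, so that only the heap operations survive in the final bound. First I would argue that the root node can now be created in $O(1)$ time rather than $O(n)$. In the proof of Lemma~\ref{algo2:time} the sole source of the additive $O(n)$ term was the initialization of the $n$-bit array $Root.B[1\ldots n]$; once a node is represented solely by its pointers and its aggregate value, the root (corresponding to the singleton $\{r_1\}$) is fully described by setting the triple $(p_1,p_2,p_3)$, the auxiliary pointer to the first {\tt 1} past $p_1$, the subset size, and the value $F=R[1]$ --- all of which takes constant time.

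Next I would verify that each child is still generated in $O(1)$ incremental time without consulting any bit array. By Lemma~\ref{lemma_o(1)}, the type of every outgoing edge of a node, and the corresponding child, is determined entirely by the positions recorded in $(p_1,p_2,p_3)$ together with a single bit lookup $B[p_1+1]$. That lookup is the only place where the bit-pattern is actually read; introducing the extra pointer to the position of the first {\tt 1} after $p_1$ lets us decide whether $B[p_1+1]=0$ in $O(1)$ time by a direct pointer comparison, so all three child-generation routines run in constant time on the pointer representation alone. Combined with Lemma~\ref{lemma_two_child}, every extracted node spawns at most two children, each built in $O(1)$ time, and its aggregate value is obtained from the parent's value by one addition and at most one subtraction.

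Finally I would account for the heap operations. As in the proofs of Lemmas~\ref{lemma_space_pre} and~\ref{algo2:time}, using a min-max-heap (or the auxiliary max-heap $M$) keeps the size of $H$ bounded by $k$, so there are at most $2k-1$ insertions together with at most $k$ extract-min and $k$ extract-max operations, each costing $O(\log_2 k)$. Summing the three contributions --- $O(1)$ to create the root, $O(k)$ to generate all children and their values, and $O(k\log_2 k)$ for the heap --- yields a total running time of $O(k\log_2 k)$, which is now independent of $n$ as claimed.

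The main obstacle I anticipate is the second step: convincingly replacing the single bit lookup $B[p_1+1]$ --- the last vestige of the bit array in the child-generation procedures --- by pointer bookkeeping that is itself maintainable in $O(1)$ time across successive shifts. One must check that each shift updates the new auxiliary pointer correctly, so that the invariant ``points to the first {\tt 1} after $p_1$'' is preserved, since a mandatory static shift moves $p_1$ rightward and an incremental shift flips the leading {\tt 0}, either of which can alter the location of that first subsequent {\tt 1}. Once this pointer-maintenance invariant is established, the remainder of the count is routine.
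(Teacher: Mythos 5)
Your proposal is correct and follows essentially the same route as the paper's own proof: replace the bit array by the single extra pointer to the first {\tt 1} after position $p_1$, observe that the root now takes $O(1)$ time and each child $O(1)$ incremental time, and let the heap operations (at most $2k-1$ insertions and $k$ extractions on a heap of size at most $k$) dominate, giving $O(k+k\log_2 k)=O(k\log_2 k)$. In fact your treatment is somewhat more careful than the paper's, which asserts the $O(1)$ per-node cost without spelling out how the lookup $B[p_1+1]$ is replaced by a pointer comparison or how the pointer invariant is maintained across shifts.
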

\begin{proof}
As mentioned above, in our solution now, we are not maintaining the bit-pattern at each node, rather, we are using just one extra pointer value that points to the position of first {\tt 1} after the position pointed by pointer $p_1$. The size of the root node is $O(1)$ as it contains a single element or rather the index of the first element of the array containing the elements in ascending order. From then on, for each node that gets generated, the time required is $O(1)$. Then from the proof of Lemma~\ref{algo2:time} it is obvious that the total time required by the modified algorithm~\ref{algo2} will be $O(k + k \log_2 k)$ = $O(k \log_2 k)$.
\end{proof}
\section{Experimental results}
\label{sec:result}
Algorithm~\ref{algo2} was implemented in C and was tested with varying values of $n$ and $k$. The experiments were conducted on a desktop powered with an Intel Xeon 2.4 GHz quad-core CPU and 32GB RAM. The operating system loaded in the machine was Fedora LINUX version 3.3.4.
\begin{table}[H]

\begin{center}
\scalebox{.51}{ \begin{tabular}{|>{\centering}p{.6 cm}|>{\centering}p{1.5 cm}|>{\centering}p{1.8 cm}|>{\centering}p{1.7 cm}|>{\centering}p{1.5 cm}|>{\centering}p{1.8 cm}|>{\centering}p{1.7 cm}|>{\centering}p{1.5 cm}|>{\centering}p{1.8 cm}|>{\centering}p{1.7 cm}|>{\centering}p{1.5 cm}|>{\centering}p{1.8 cm}|>{\centering}p{1.7 cm}|c|}
\hline
& \multicolumn{3}{c|}{k=1000} & \multicolumn{3}{c|}{k=10000} & \multicolumn{3}{c|}{k=100000} & \multicolumn{3}{c|}{k=1000000} & k=10000000\\ 
\hline
n & \tabincell{c}{Time with \\Bit Vector \\$(B)$} & \tabincell{c}{Time without \\Bit Vector\\$(V)$} & \tabincell{c}{Speed Up \\$(S= B /\ V)$}& \tabincell{c}{Time with \\Bit Vector \\$(B)$} & \tabincell{c}{Time without \\Bit Vector\\$(V)$} & \tabincell{c}{Speed Up \\$(S= B /\ V)$}& \tabincell{c}{Time with \\Bit Vector \\$(B)$} & \tabincell{c}{Time without \\Bit Vector\\$(V)$} & \tabincell{c}{Speed Up \\$(S= B /\ V)$}& \tabincell{c}{Time with \\Bit Vector \\$(B)$} & \tabincell{c}{Time without \\Bit Vector\\$(V)$} & \tabincell{c}{Speed Up \\$(S= B /\ V)$} & \tabincell{c}{Time without \\Bit Vector $(V)$} \\
\hline
100 & 0.0069 & 0.0020 & $3.45 X$ & 0.0541 & 0.0196& $2.76 X$& 0.4062 & 0.1846 & $2.20 X$ & 4.4800 & 2.2010 & $2.04 X$ & 25.8228\\
\hline
200 & 0.0105 & 0.0026 & $4.04 X$& 0.0706 & 0.0292 & $2.42 X$ & 0.5893 & 0.2013 & $2.93 X$ & 6.3452 & 2.1755 & $2.92 X$ & 24.9263 \\
\hline
300 & 0.0142 & 0.0024 & $5.92 X$ & 0.1733 & 0.0283 & $6.12 X$& 0.7775 & 0.2001 & $3.89 X$ & 8.2391& 2.1625 & $3.81 X$ & 24.4716\\
\hline
400 & 0.0173 & 0.0021 & $8.24 X$& 0.1857 & 0.0293 & $6.34 X$ & 0.9568 & 0.2001 & $4.78 X$ & 10.0176 & 2.1052 & $4.76 X$ & 23.7235 \\
\hline
500 & 0.0193 & 0.0019 & $10.16 X$ & 0.1905 & 0.0218 & $8.74 X$ & 1.1319 & 0.1902 & $5.95 X$ & 11.7948& 2.1720 & $5.43 X$ & 24.0756 \\
\hline
600 & 0.0229 & 0.0023 & $9.96 X$ & 0.2199 & 0.0262 & $8.39 X$ & 1.3318 & 0.2039 & $6.53 X$ & 14.2409 & 2.2006 & $6.47 X$ & 33.5502 \\
\hline
700 & 0.0246 & 0.0022 &$11.18 X$ & 0.1853 & 0.0289 & $6.41 X$ & 1.4907 & 0.2037 & $7.32 X$ & 20.6880 & 2.2287 & $9.28 X$ & 33.8687\\
\hline
800 & 0.0281 & 0.0022 & $12.77 X$ & 0.1749 & 0.0304 & $5.75 X$ & 1.6638 & 0.2114 & $7.87 X$ & 30.6208& 2.1917&$13.97 X$ & 32.0629 \\
\hline
900 & 0.0334 & 0.0022 & $15.18 X$ & 0.2852 & 0.0324 & $8.80 X$& 1.8561 & 0.2180 & $8.51 X$ & 35.0497& 2.2686 &$15.45 X$ & 33.5968 \\
\hline
1000 & 0.0364 & 0.0022 &$16.55 X$& 0.3173 & 0.0286 &$11.09 X$ & 2.0802 & 0.2135 & $9.74 X$ & 49.1019 & 2.3635 & $20.76 X$ & 39.2978 \\
\hline
\end{tabular}}
\end{center}
\caption{Comparison of runtimes in seconds for the two variants for varying values of $n$ and $k$}
\label{tab1}
\end{table}

\begin{figure}[H]
\centering % <-- added

\begin{subfigure}{0.5\textwidth}
\includegraphics[scale = .3]{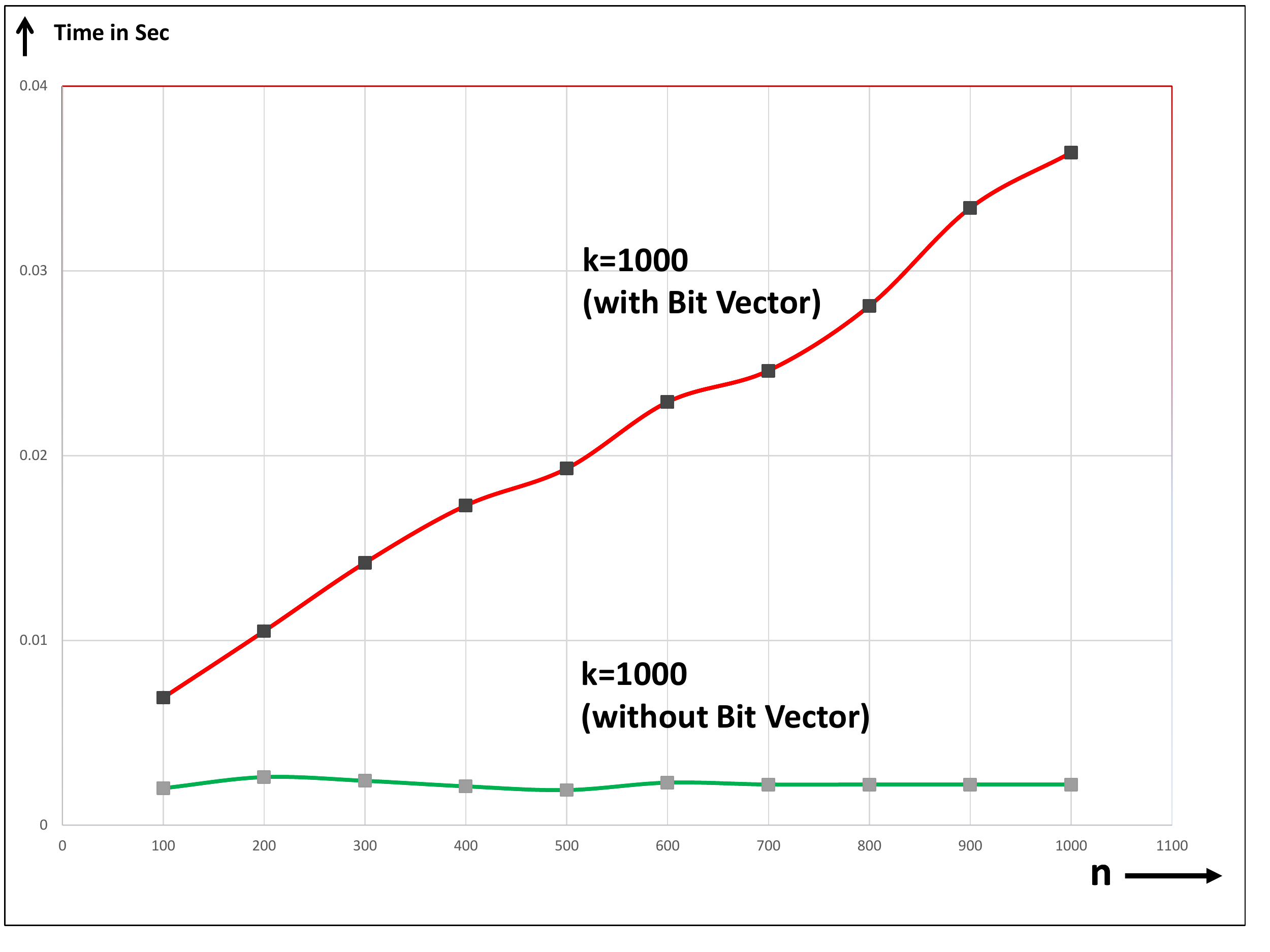}

\caption{Variation of runtime with $n$ for $k=1000$}

\end{subfigure}\hspace*{.4 cm}% <-- added
\begin{subfigure}{0.5\textwidth}
\includegraphics[scale =.3]{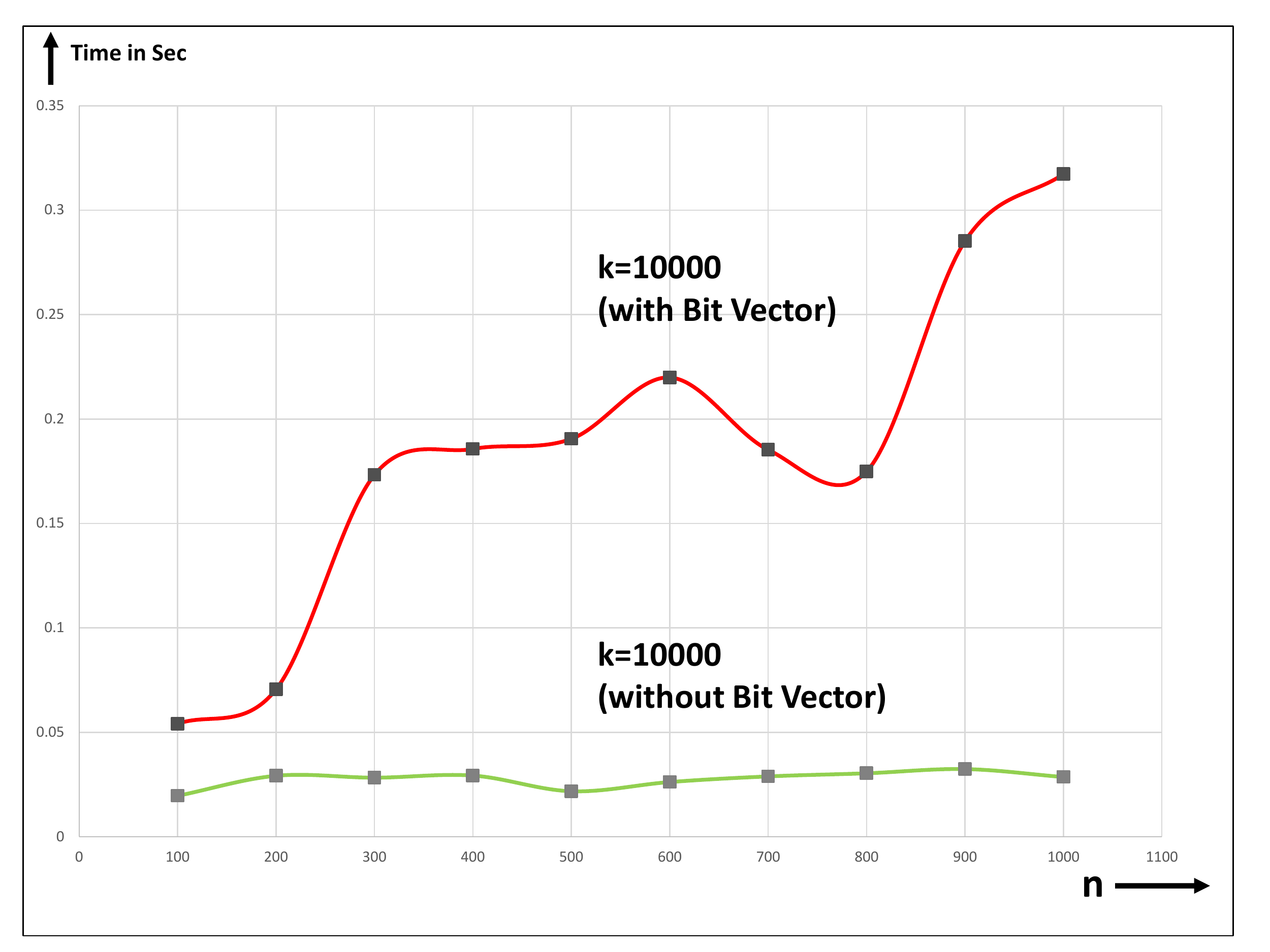}

\caption{Variation of runtime with $n$ for $k=10000$}
\end{subfigure}% <-- added

\medskip
\centering % <-- added

\begin{subfigure}{0.5\textwidth}
\includegraphics[scale = .3]{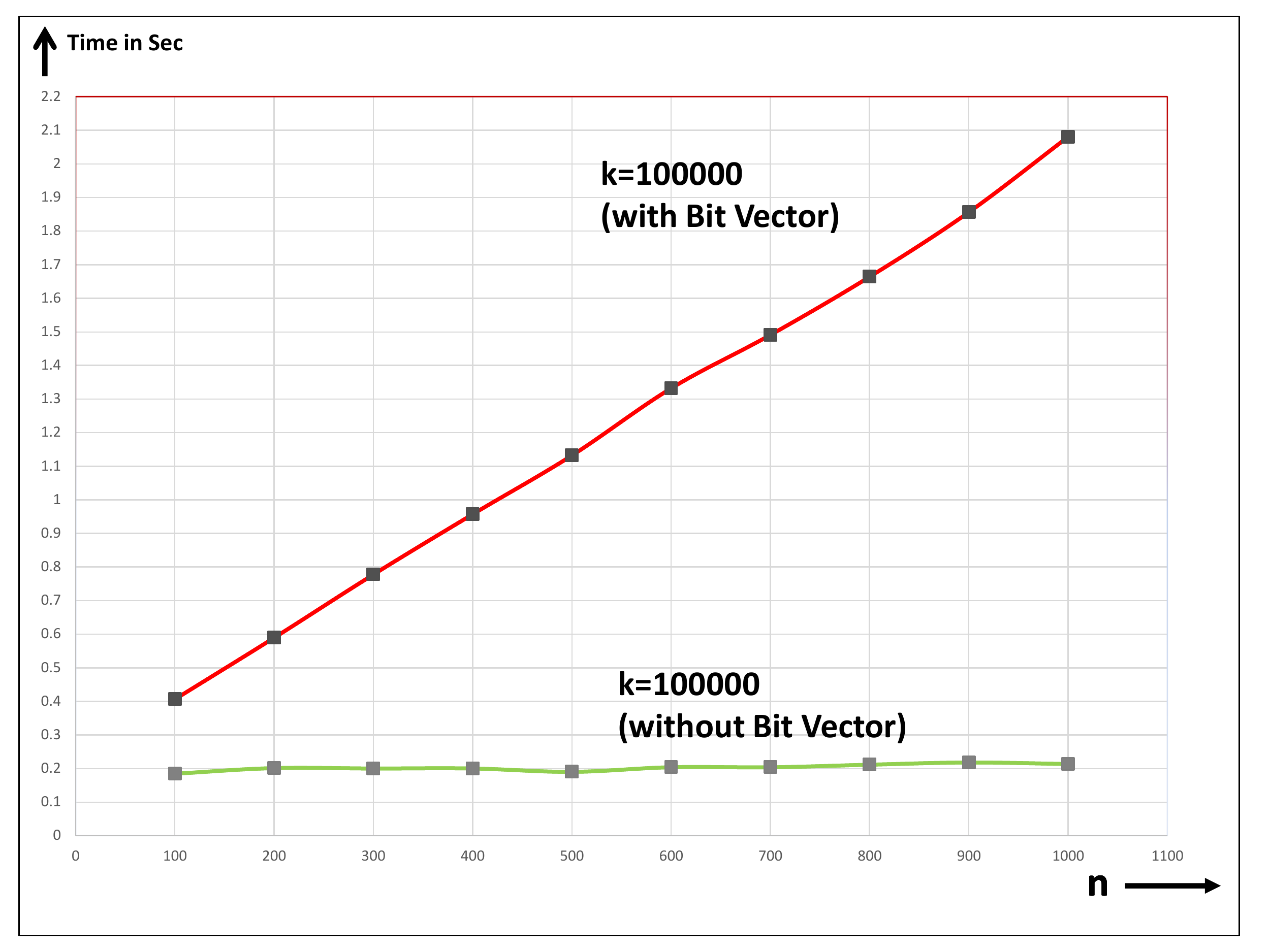}

\caption{Variation of runtime with $n$ for $k=100000$}

\end{subfigure}\hspace*{.3 cm}% <-- added
\begin{subfigure}{0.5\textwidth}
\includegraphics[scale = .3]{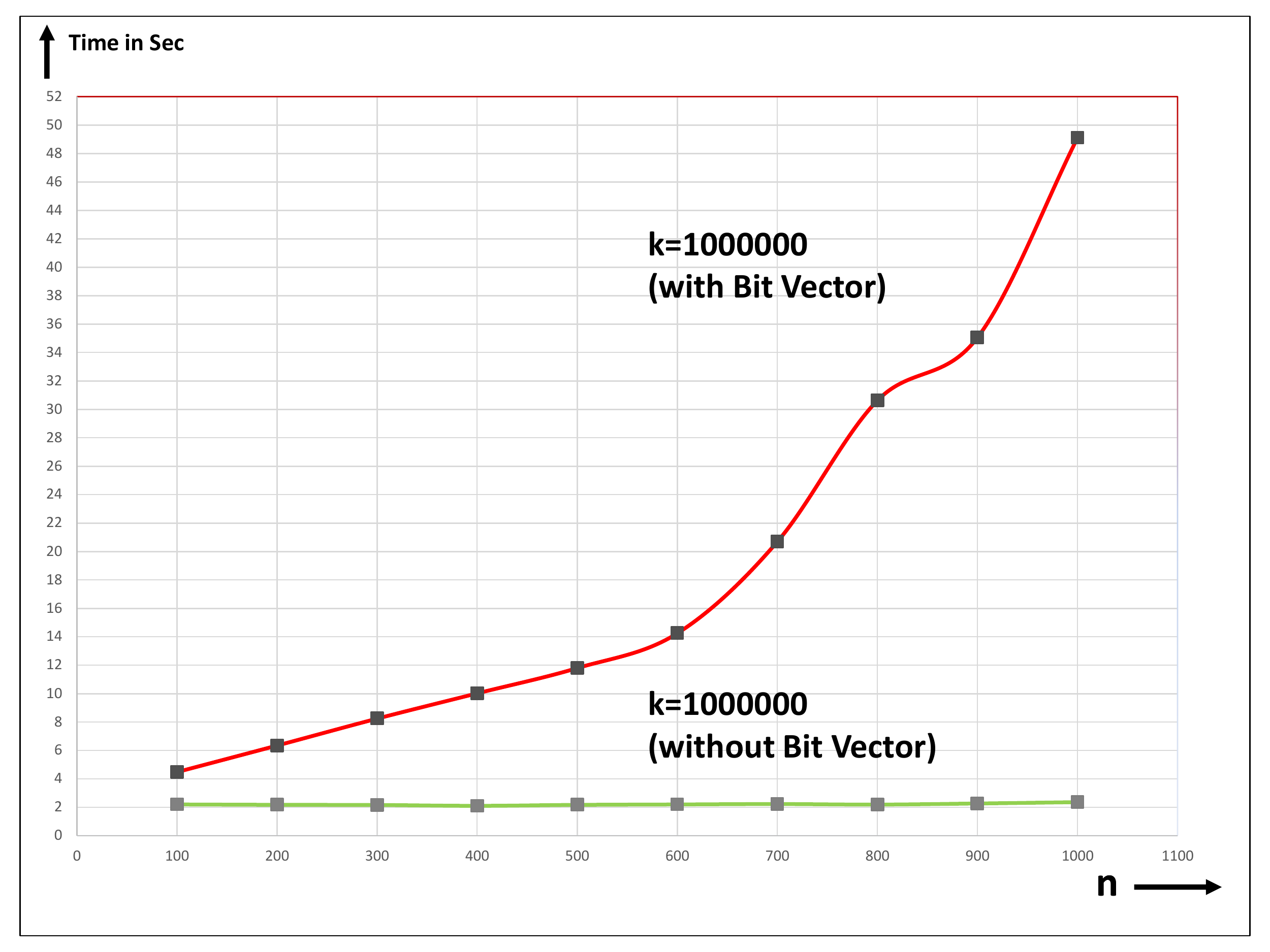}

\caption{Variation of runtime with $n$ for $k=1000000$}
\end{subfigure} 
\caption{Plots of comparing of runtimes in seconds for the two variants with $n$ for varying values of $k$}
\label{fig:6}
\end{figure}
For each specific choice of $(n, k)$, five datasets were generated randomly, and after running each of the algorithms it was found that the runtimes and other reported parameters hardly vary for that specific pair of $n$ and $k$. The results are presented in Table~\ref{tab1} and Table~\ref{tab2} and we made sure that whenever 
two algorithms were compared, they were run on exactly the same dataset.

Table~\ref{tab1} presents the runtimes of the two algorithms, one storing the bit vector in each node explicitly (Algorithm~\ref{algo2}) and the other one being the modified version of the same algorithm where the bit string is not stored and the elements of the subset are generated on the fly by making a few alterations of the constituent elements of the subset corresponding to its parent node in the implicit graph.
\begin{figure}[H]
\centering % <-- added
\hspace*{-1cm}
\begin{subfigure}{0.5\textwidth}
\includegraphics[scale = .35]{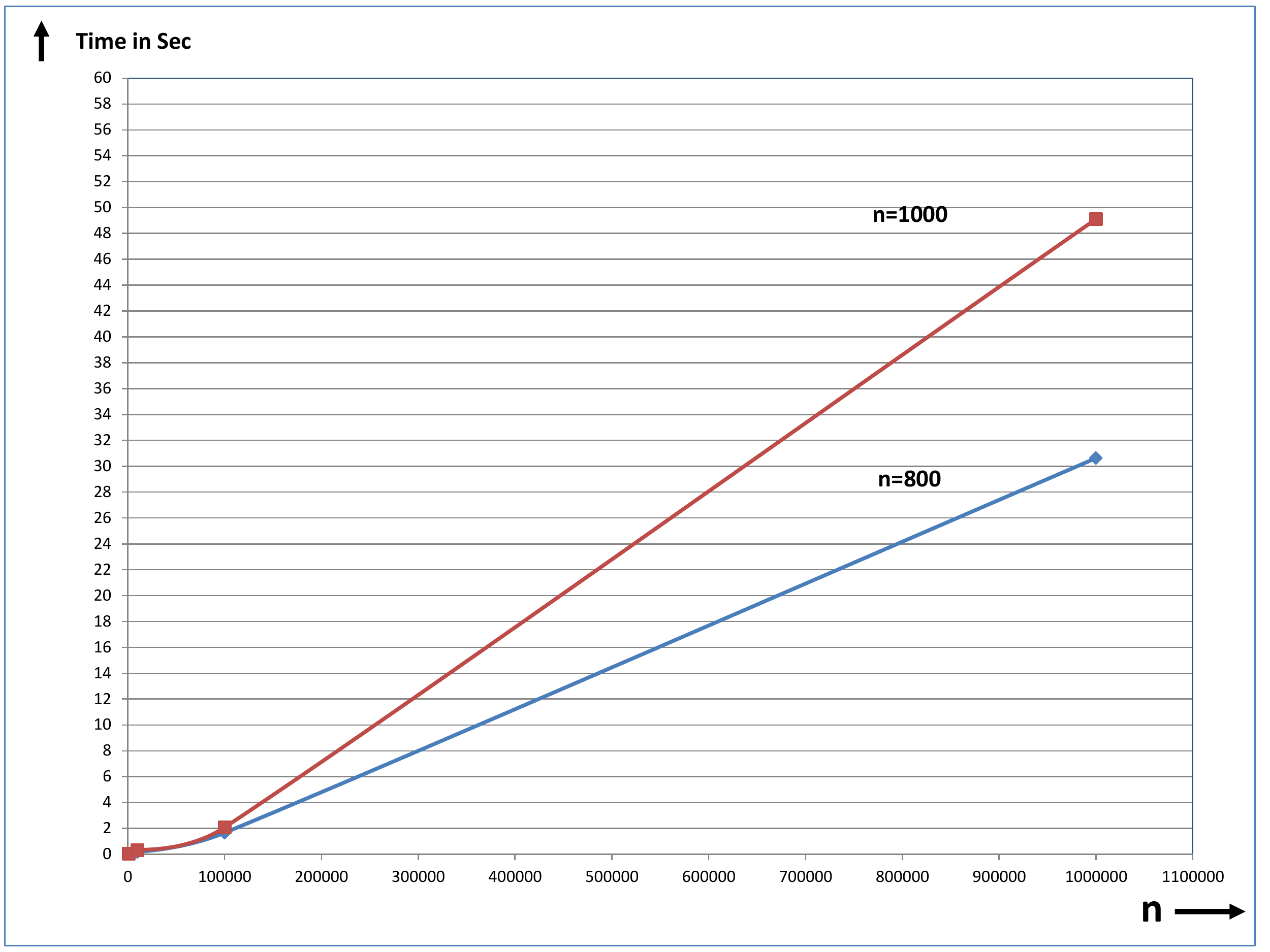}

\caption{Variation of runtime for the with vector variant with $k$ for $n=800$ and $n=1000$}

\end{subfigure}\hspace*{1.2cm}% <-- added
\begin{subfigure}{0.5\textwidth}
\includegraphics[scale =.35]{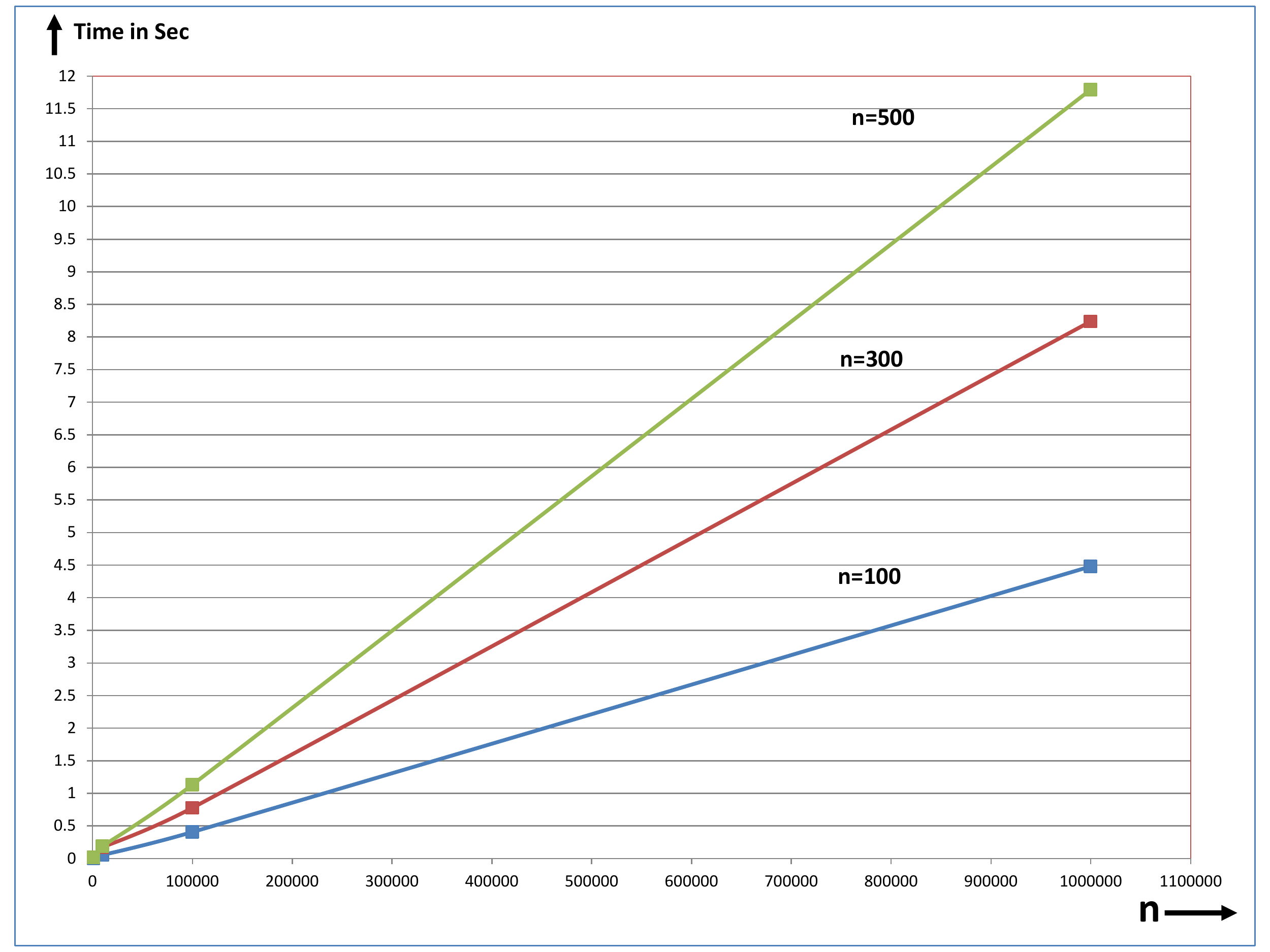}

\caption{Variation of runtime for the with vector variant with $k$ for $n=100, n=300$ and $n=500$}
\end{subfigure}% <-- added

\medskip
\centering % <-- added
\hspace*{-1.5cm}
\begin{subfigure}{0.5\textwidth}
\includegraphics[scale = .4]{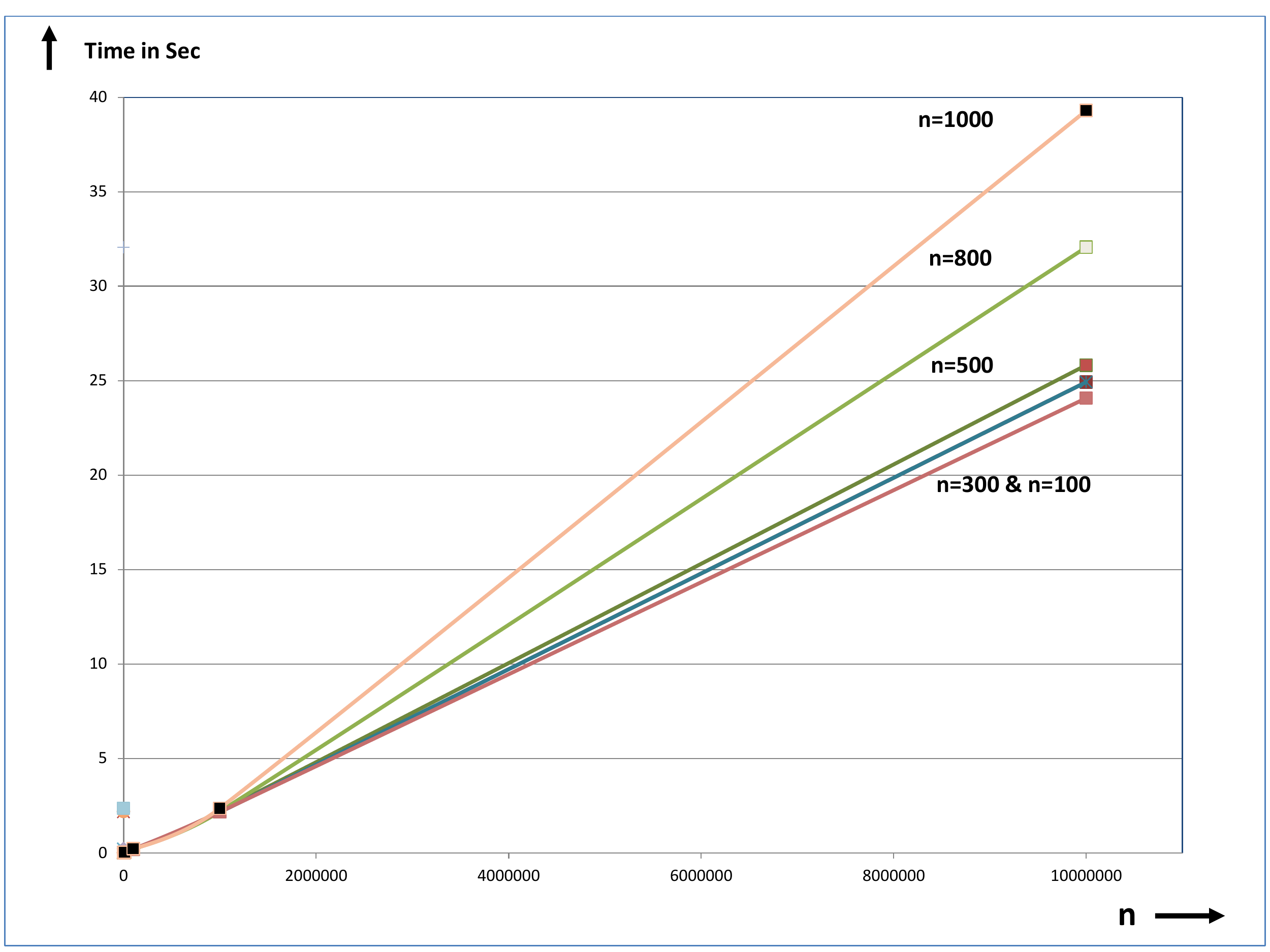}
\vspace*{-.5cm} 
\caption{Variation of runtime for the without vector variant with $k$ for varying values of $n$}

\end{subfigure}% <-- added
\caption{Plots of runtimes in seconds for the two variants with $k$ for varying values of $n$}
\label{fig:7}
\end{figure}

The latter version clearly outperforms the former one and it is found that the speedup is sometimes as high as $20X$ for higher values of $k$. The modified algorithm can even run in less than a minute, where $k$ was as high as $10^7$. 
The former algorithm, however, was getting very slow at those values of $k$ and we did not wait till the runs completed.

\begin{table}

\begin{center}
\scalebox{.6}{ \begin{tabular}{|c|c|c|c|c|c|c|c|c|c|c|}
\hline
& & \multicolumn{2}{c|}{Total time (in sec)} & & \multicolumn{2}{c|}{\tabincell{c}{Total No. of \\Entries in the Heap}} & & \multicolumn{2}{c|}{Peak Size of the Heap} & \\
\hline
n & k& \tabincell{c}{Existing \\Method\\$(B)$} & \tabincell{c}{Our \\Method\\$(V)$} & \tabincell{c}{Speed Up \\$(S= B /\ V)$} & \tabincell{c}{Existing \\Method\\$(P)$} & \tabincell{c}{Our \\Method\\$(Q)$} & \tabincell{c}{\% Improvement\\$(T_I= \frac{(P-Q)}{P} \times 100)$} & \tabincell{c}{Existing \\Method\\$(R)$} & \tabincell{c}{Our \\Method\\$(S)$} & \tabincell{c}{\% Improvement \\$(T_P= \frac{(R-S)}{R} \times 100)$}\\

\hline
100 & 1000 & 0.0009 & 0.0007 & $1.29 X$ & 2001 &1132 & 43.43 &1001 &136 & 86.41 \\
\hline
100 & 10000 & 0.0114 & 0.0091 & $1.25 X$ & 20001 &11078 & 44.61 &10001 &1085 & 89.15 \\
\hline
100 & 100000& 0.1076 & 0.0998 & $1.08 X$& 200001 &109531 & 45.23 &100001 &9531 & 90.47 \\
\hline
100 &1000000 & 1.2933 & 1.0105 & $1.28 X$ & 2000001 &1083508 & 45.82 &1000001 &83519 & 91.65 \\
\hline
100 &10000000 & 12.7919& 11.5919 & $1.10 X$ & 20000001&10745231 & 46.27 &10000001 &745270 & 92.55 \\
\hline

200 & 1000 & 0.0008 &0.00068 & $1.18 X$ &2001 &1102 & 44.93 &1001 &107 & 89.31 \\
\hline
200 & 10000 & 0.0089 & 0.0072 & $1.24 X$ &20001 &10848 & 45.76 &10001 &854 & 91.46 \\
\hline
200 & 100000&0.1068 & 0.0819 & $1.30 X$ &200001 &107377 & 46.31 &100001 &7381 & 92.62 \\
\hline
200 &1000000 & 1.2451 & 0.9553 & $1.30 X$ & 2000001&1063607 & 46.82 &1000001 &63627 & 93.64 \\
\hline
200 &10000000 & 12.1236 & 10.6818 & $1.13 X$ & 20000001&10557415 & 47.21 &10000001 &557428 & 94.43 \\

\hline
300 & 1000 &0.0007 & 0.0006 & $1.17 X$ &2001 &1123& 43.88 &1001 &126 & 87.41 \\
\hline
300 & 10000 & 0.0082 & 0.0066 & $1.24 X$ &20001 &10997& 45.02 &10001 &997 & 90.03 \\
\hline
300 & 100000&0.1073 & 0.0798 & $1.34 X$ &200001 &109175 & 45.41 &100001 &9175 & 90.83 \\
\hline
300 &1000000 & 1.2064 &0.9547 & $1.26 X$ & 2000001&1081802 & 45.91 &1000001 &81802 & 91.82 \\
\hline
300 &10000000 & 12.0638 & 10.7924 & $1.12 X$ & 20000001&10737463 & 46.31 &10000001 &737465 & 92.63 \\

\hline
400 & 1000 &0.0007 &0.0006 & $1.17 X$ &2001 &1069& 46.58 &1001 &69 & 93.11 \\
\hline
400 & 10000 &0.0081 &0.0064 & $1.27 X$ &20001 &10630& 46.85 &10001 &631 & 93.69 \\
\hline
400 & 100000& 0.1065 & 0.0762 & $1.40 X$ &200001 &105278 & 47.36 &100001 &5280 & 94.72 \\
\hline
400 &1000000 & 1.2034 & 0.8952 & $1.34 X$ & 2000001&1047571 & 47.62 &1000001 &47588 & 95.24 \\
\hline
400 &10000000 & 11.7885 & 9.5206 & $1.24 X$ & 20000001&10427513 & 47.86 &10000001 &427588 & 95.72 \\

\hline
500 & 1000 &0.0007 &0.00059&$1.19 X$ &2001 &1076& 46.23 &1001 &77 & 92.31 \\
\hline
500 & 10000 &0.0082 &0.0062 & $1.32 X$ &20001 &10523& 47.39 &10001 &525 & 94.75 \\
\hline
500 & 100000& 0.1068 & 0.0755 & $1.42 X$ &200001 &104460 & 47.77 & 100001 &4460 & 95.54 \\
\hline
500 &1000000 & 1.1767 & 0.9201 & $1.28 X$ & 2000001&1039204 & 48.04 &1000001 &39215 & 96.08 \\
\hline
500 &10000000 & 11.9332 &10.6270 & $1.12 X$ & 20000001&10346223 & 48.27 &10000001 &346254 & 96.54 \\

\hline
600 & 1000 &0.0007 &0.0006 & $1.17 X$ &2001 &1076& 46.23 &1001 &76 & 92.41 \\
\hline
600 & 10000 &0.0097 &0.0088 & $1.10 X$ &20001 &10613& 46.94 &10001 &616 & 93.84 \\
\hline
600 & 100000& 0.1057 &0.0774 & $1.37 X$ &200001 &105353 & 47.32 &100001 &5355 & 94.65 \\
\hline
600 &1000000 & 1.1732 & 0.8996 & $1.30 X$ & 2000001&1048164 & 47.59 &1000001 &48183 & 95.18 \\
\hline
600 &10000000 & 12.0589 &11.0571 & $1.09 X$ & 20000001&10437258 & 47.81&10000001 &437265 & 95.63 \\

\hline
700 & 1000 &0.0007 &0.0006 & $1.17 X$ &2001 &1073& 46.38 &1001 &75 & 92.51 \\
\hline
700 & 10000 &0.0078 &0.0064 & $1.22 X$ &20001 &10740 & 46.30 &10001 &747 & 92.53 \\
\hline
700 & 100000&0.1060 &0.0776 & $1.37 X$ &200001 &106428 & 46.79 &100001 &6436 & 93.56 \\
\hline
700 &1000000 &1.1557 &0.9502 & $1.22 X$ & 2000001&1058889 & 47.06 &1000001 &58892 & 94.11 \\
\hline
700 &10000000 & 12.0418 &11.3002 & $1.07 X$ & 20000001&10543429 & 47.28 &10000001 &543467 & 94.57 \\

\hline
800 & 1000 &0.0007 &0.0006 & $1.17 X$ &2001 &1053& 47.38 &1001 &53 & 94.71 \\
\hline
800 & 10000 &0.0080 &0.0062& $1.29 X$ &20001 &10346& 48.27 &10001 &347 & 96.53 \\
\hline
800 & 100000& 0.1048 &0.0721 & $1.45 X$ &200001 &102680 & 48.66 &100001 &2682 & 97.32 \\
\hline
800 &1000000 &1.1492 &0.8254 & $1.39 X$ & 2000001&1024118 & 48.79 &1000001 &24123 & 97.59 \\
\hline
800 &10000000 & 11.8349 &9.5207 & $1.24 X$ & 20000001&10213912 & 48.93 &10000001 &213916 & 97.86 \\

\hline
900 & 1000 &0.0007 &0.0006 & $1.17 X$ & 2001 &1066& 46.73 &1001 &68 & 93.21 \\
\hline
900 & 10000 &0.0079 &0.0064 & $1.23 X$ &20001 &10643& 46.79 &10001 &645 & 93.55 \\
\hline
900 & 100000&0.1040 &0.0768 & $1.35 X$ &200001 &106278 & 46.86 &100001 &6279 & 93.72 \\
\hline
900 &1000000 &1.1248 &0.8773 & $1.28 X$ & 2000001&1058804 & 47.06 &1000001 &58804 & 94.12 \\
\hline
900 &10000000 & 11.7640 &10.5033 & $1.12 X$ & 20000001&10551227 & 47.24 &10000001 &551264 & 94.49 \\

\hline
1000 & 1000 &0.0007 &0.0006 & $1.17 X$ &2001 &1079& 46.08 &1001 &79 & 92.11 \\
\hline
1000 & 10000 &0.0081 &0.0064 & $1.27 X$ &20001 &10622& 46.89 &10001 &631 & 93.69 \\
\hline
1000 & 100000&0.1052 &0.0766 & $1.37 X$&200001 &105589 & 47.21 &100001 &5591 & 94.41 \\
\hline
1000 &1000000 &1.1618 &0.8831 & $1.32 X$ & 2000001&1051421 & 47.43 &1000001 &51425 & 94.86 \\
\hline
1000 &10000000 & 12.0205 &10.7482 & $1.12 X$ & 20000001&10479269 & 47.60 &10000001 &479286 & 95.21 \\
\hline
\end{tabular}}
\end{center}
\caption{Comparison of our algorithm with an existing algorithm reporting only the sums}
\label{tab2}
\end{table}

\vspace*{-1.6cm}
\begin{figure}[H]
\centering % <-- added
\hspace*{-1.4 cm}
\begin{subfigure}{0.55\textwidth}
\includegraphics[width=\linewidth]{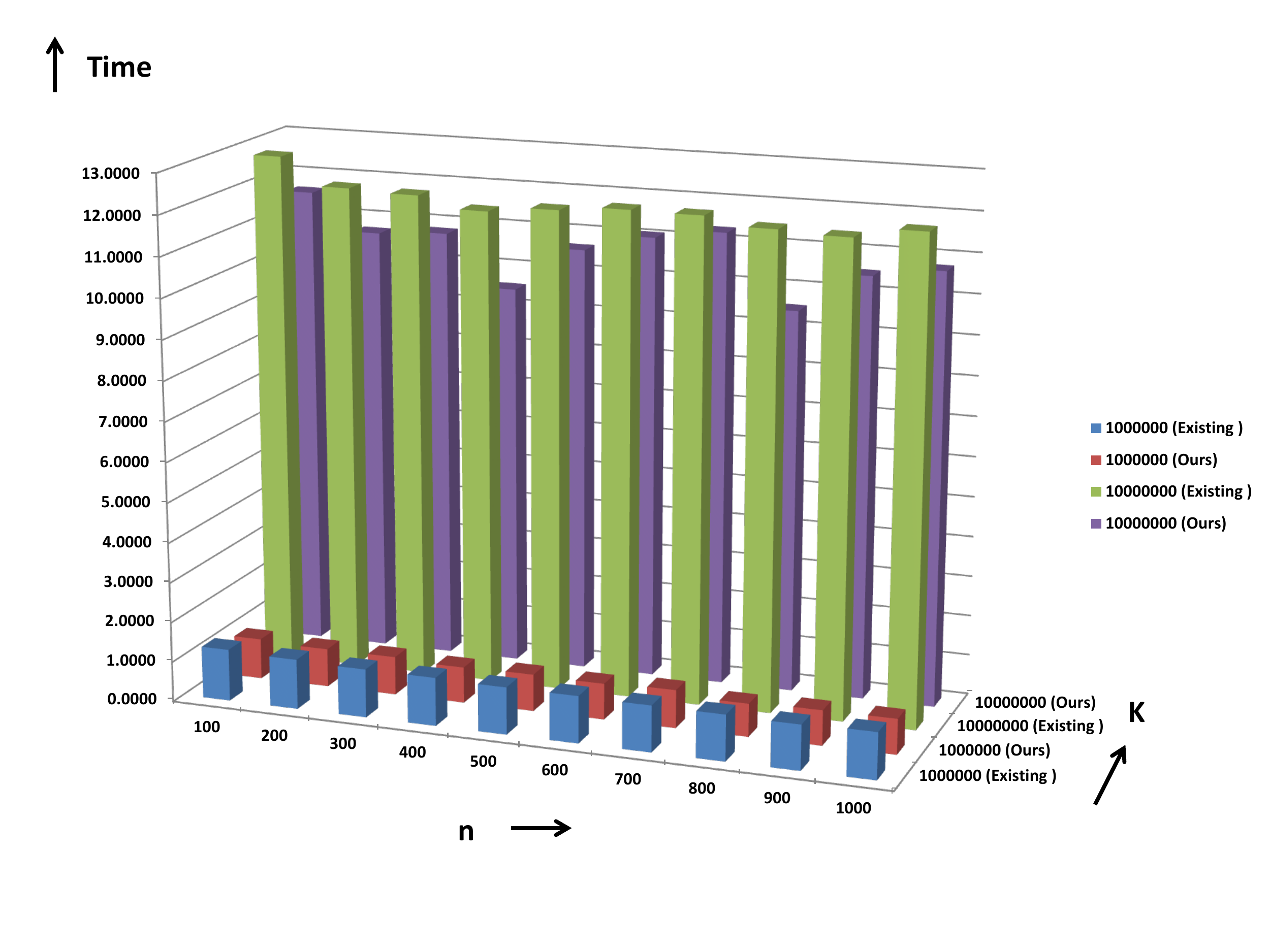}
\vspace*{-1.2cm}
\caption{Variation of runtime for $k=1000000$ and $k=10000000$}
\end{subfigure}\hspace*{.4 cm}% <-- added
\begin{subfigure}{0.55\textwidth}
\includegraphics[width=\linewidth]{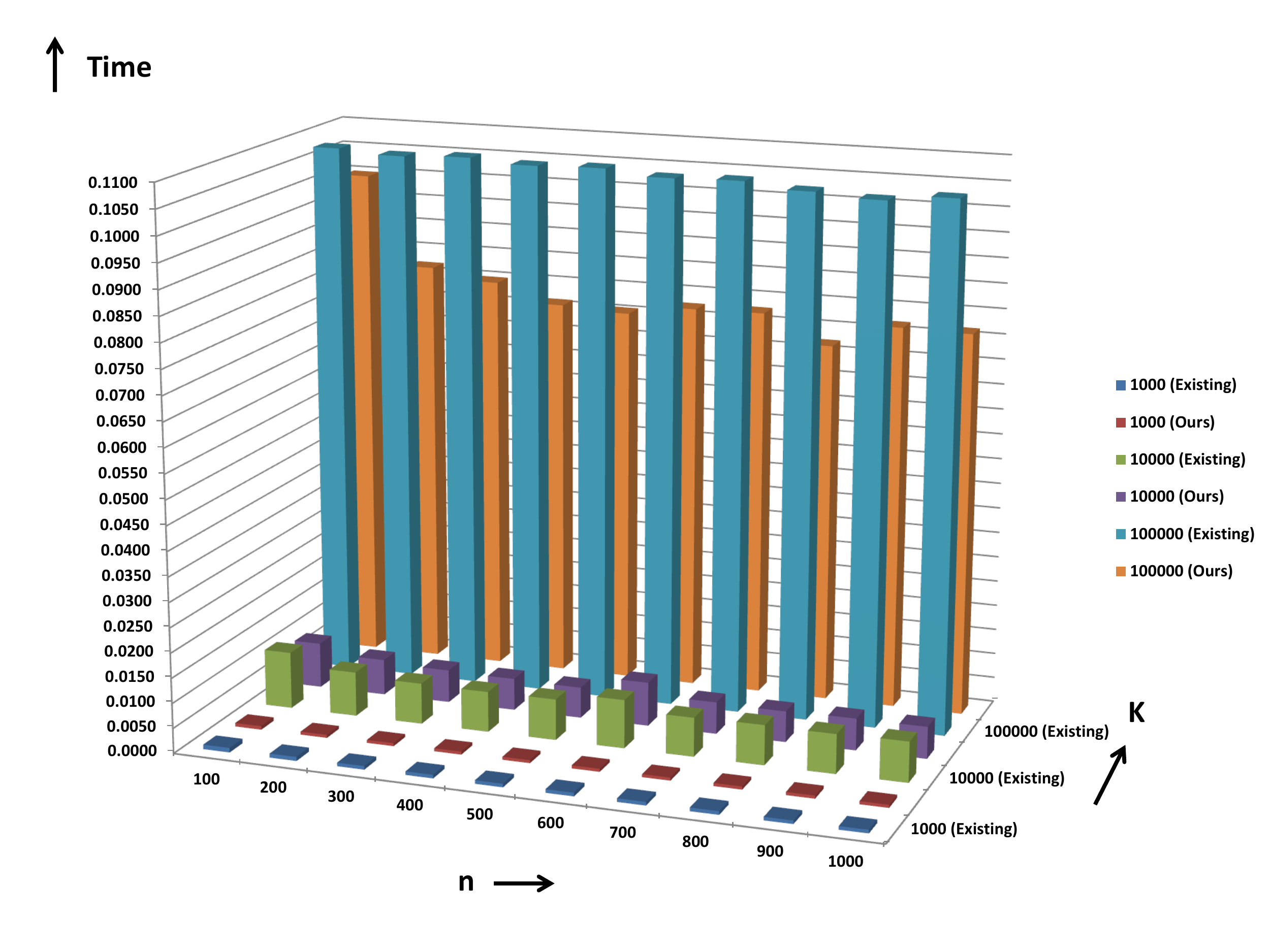}
\vspace*{-.9cm}
\caption{Variation of runtime for $k=1000, k=10000$ and $k=100000$}

\end{subfigure}% <-- added

\medskip
\centering % <-- added
\hspace*{-1 cm}
\begin{subfigure}{0.55\textwidth}
\includegraphics[width=\linewidth]{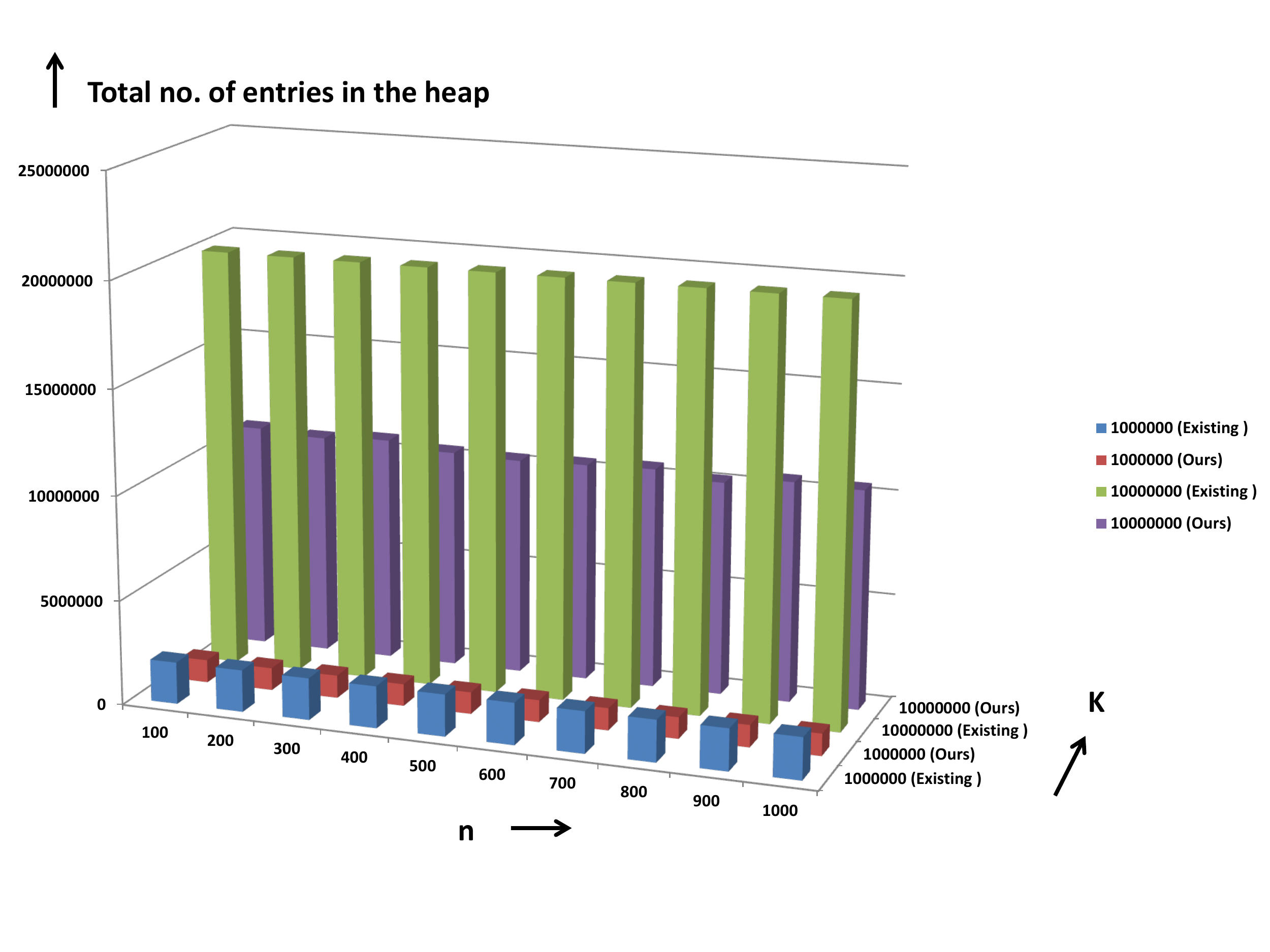}
\vspace*{-1.4cm}
\caption{Variation of total no. of heap entries for $k=1000000$ and $k=10000000$}
\end{subfigure}\hspace*{.3 cm}% <-- added
\begin{subfigure}{0.55\textwidth}
\includegraphics[width=\linewidth]{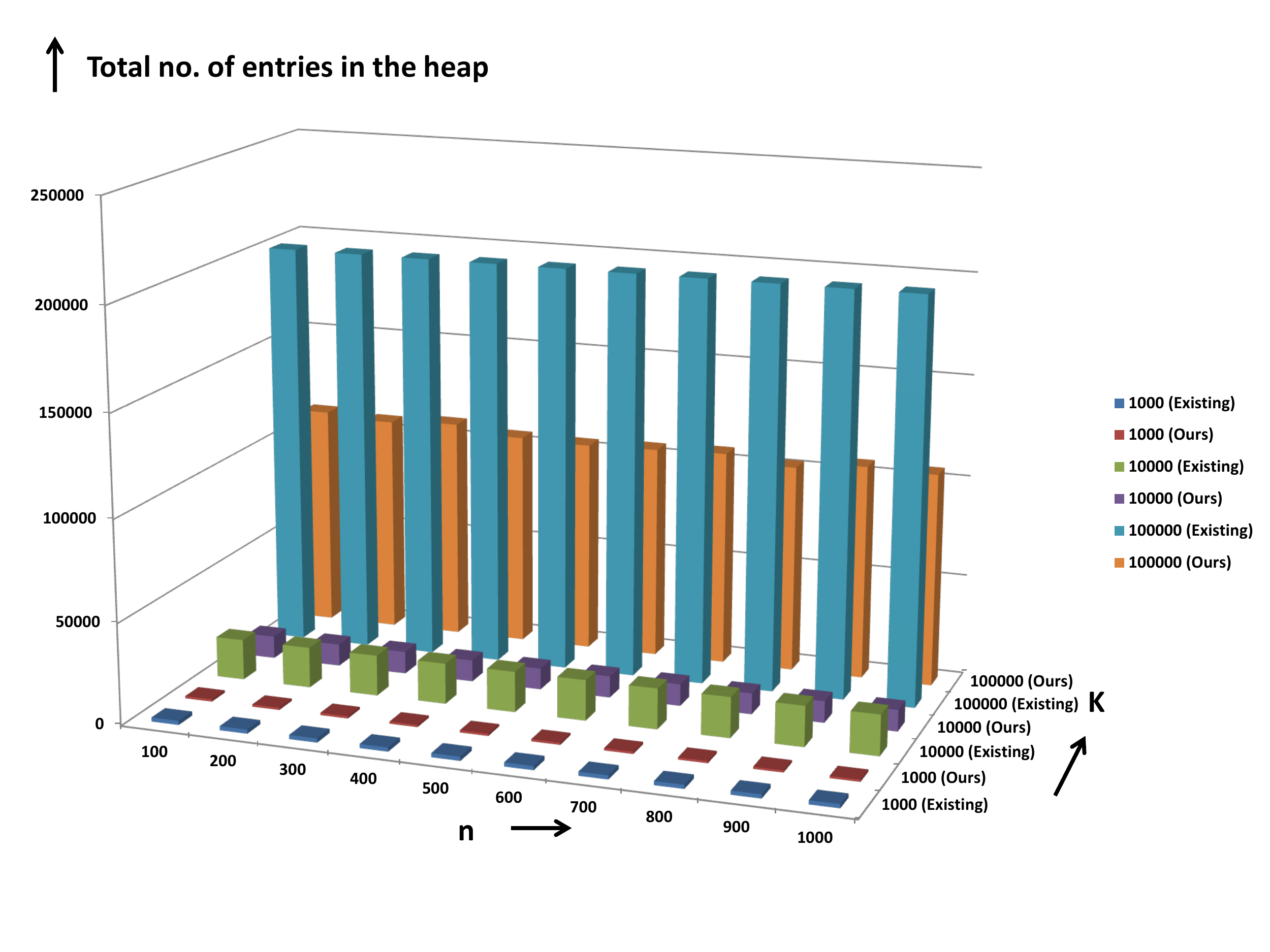}
\vspace*{-1.4cm}
\caption{Variation of total no. of heap entries for $k=1000, k=10000$ and $k=100000$}
\end{subfigure} % <-- added

\medskip
\centering % <-- added
\hspace*{-1 cm}
\begin{subfigure}{0.55\textwidth}
\includegraphics[width=\linewidth]{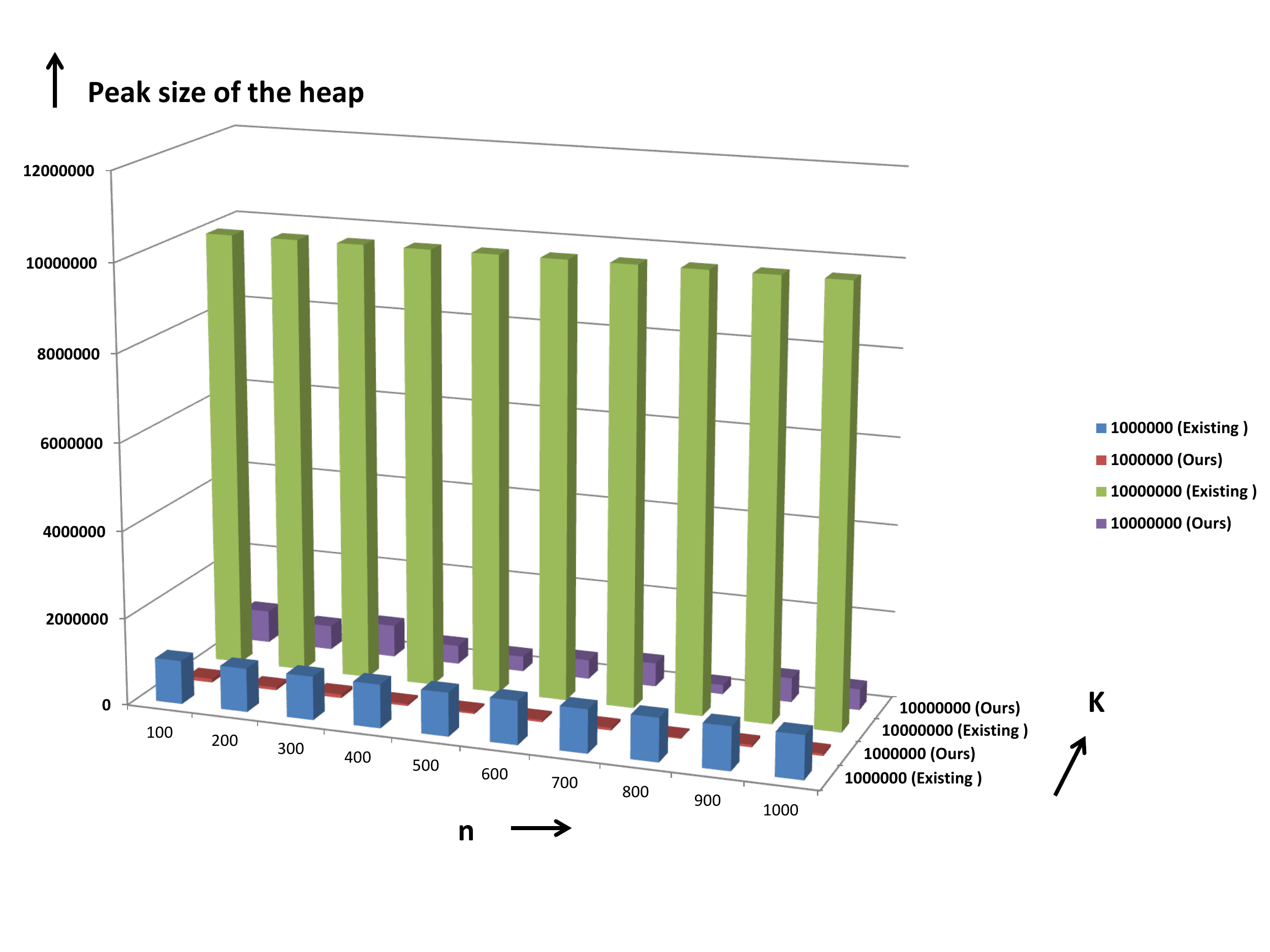}
\vspace*{-1.4cm}
\caption{Variation of peak size of the heap for $k=1000000$ and $k=10000000$}
\end{subfigure}\hspace*{.3 cm}% <-- added
\begin{subfigure}{0.55\textwidth}
\includegraphics[width=\linewidth]{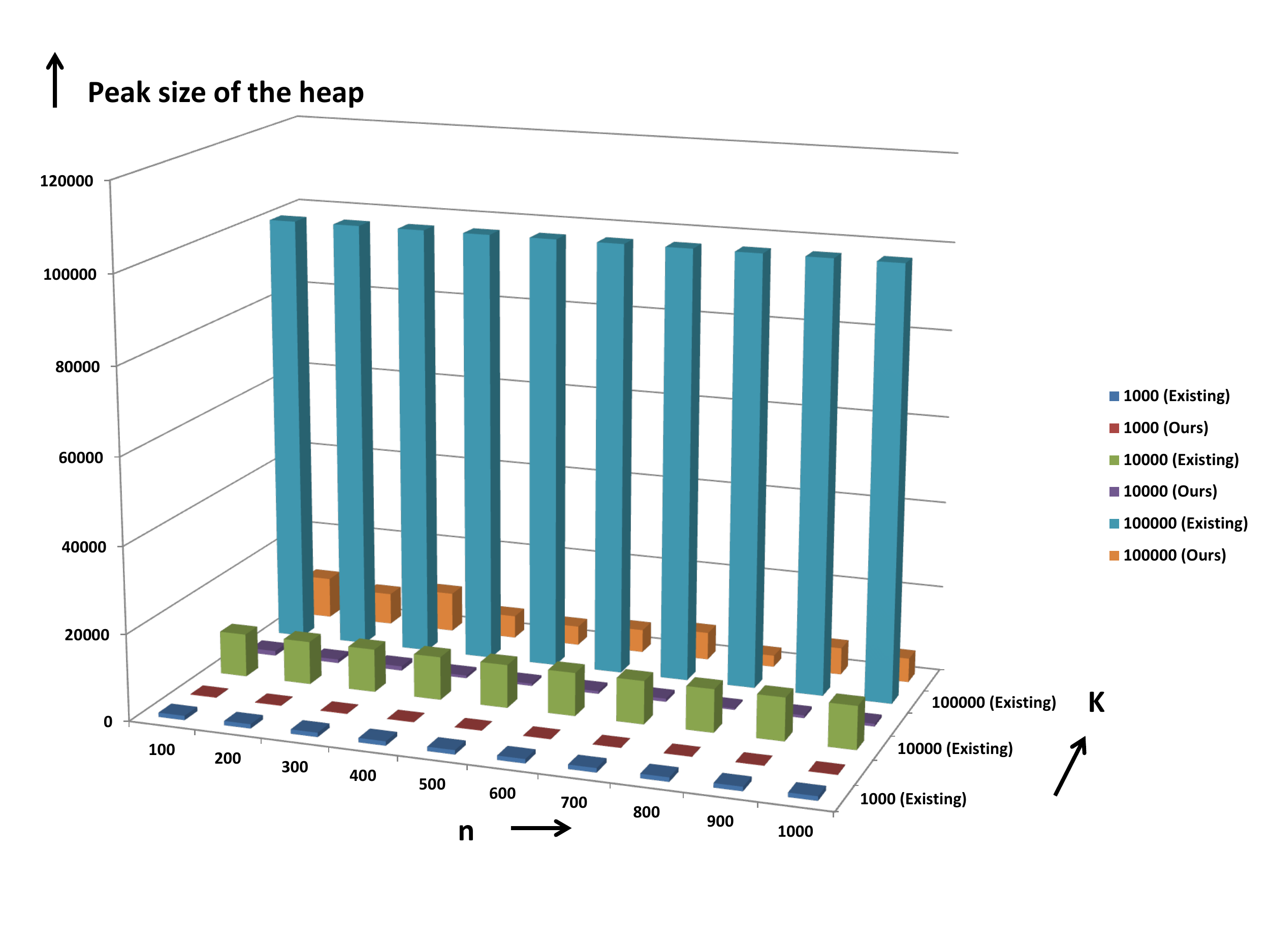}
\vspace*{-1.4cm}
\caption{Variation of peak size of the heap for $k=1000, k=10000$ and $k=100000$}

\end{subfigure}

\caption{Plots of comparing our algorithm with an existing algorithm reporting only the sums}
\label{fig:8}
\end{figure}

Several graphs for varying values of $k$ and $n$ are plotted. In Figure~\ref{fig:6}, the variation of run-time is plotted with $n$, keeping $k$ fixed for four different values and in Figure~\ref{fig:7}, the growth of run-time is plotted with respect to $k$, for different values of $n$. Note that the plotted graphs in Figure~\ref{fig:6} and Figure~\ref{fig:7} are very much in agreement with the complexity results stated in the lemma~\ref{algo2:time} and~\ref{algo_sum}, the runtimes for the modified algorithm being clearly independent of $n$. 

Finally, the results of our algorithm that only reports the sum of the subsets according to the ranks are compared with an existing algorithm~\cite{d_math15} that also achieves the same feat. Both the algorithms were implemented in C and were run on the same machine mentioned above and on exactly the same set of randomly generated integers. The running times, total number of entries into the heap and also the peak number of entries in the heap attained during the whole run for each of the two algorithms are reported in Table~\ref{tab2}. The gains are much more for the two latter parameters; the difference in runtime being somewhat compensated by the constant time updating of some integers values stored in each node, which is little more in our case. For the ease of visual comparison, results are displayed in the bar-charts given in Fig.~\ref{fig:8}. It can be seen that our method consistently performs better than the existing method.

\subsection{Analysis of the results}
It can be recalled that in the earlier work~\cite{d_math15}, each time a node is extracted from the heap, two of its children are inserted immediately and hence the number of total entries in the heap has a fixed value of $2k+1$. Also, the peak number of entries into the heap always rises to $k+1$, which can be verified from Table~\ref{tab2} also. This is because each extraction from the heap is followed by a double insertion that effectively increases the heap size by one after each instruction. The last insertion could be avoided however if the check is performed before insertion that whether $k^{th}$ item has been extracted. In that case, the numbers would have been reduced to $2k - 1$ and $k$ respectively.

Note that in our solution, every node of the DAG $G$ can have at most two children (Lemma~\ref{lemma_two_child}). However, most of the times, the number of children is only one and only in a few cases it is two and in some other cases it is zero also. Hence, though the number of entries in the heap of our solution lies in $[k, 2k-1]$, in practice, the actual number of entries turns out to be far less than $2k - 1$. Also, the peak number of entries into the heap at any point of time remains far lower than the value of $k$. This can be easily observed in Table~\ref{tab2}. It is sort of obvious because in our case, most of the times only one child node needs to be inserted in the heap after extracting the node with the minimum value from the heap. This could have been predicted very easily from the sparse nature of the graph $G$ shown in Fig.~\ref{fig:5}, that we implicitly kept on constructing during the run of the algorithm. It hardly increases the heap size as in most steps the extraction compensates the single insertion that follows. Actually, our algorithm exploits the partial order that inherently exists in the data in a better manner than the simpler algorithm. It is clear from the above discussion that due to the lesser value of the peak number of entries into the heap at any point of time; our approach will perform much better than the earlier solution if this is being applied in any practical problem, especially when there is a lot of satellite data in each node.

\section{Conclusion}
\label{section:conclusion}
An algorithm is proposed to compute the top-$k$ subsets of a set $R$ of $n$ real numbers, creating portions of an implicit DAG on demand, that gets rid of the storage requirement of the preprocessing step altogether. In several steps, we have made the DAG as sparse as possible so that the overall run-time complexity improves retaining its useful properties. Our algorithms were implemented in C and the plots of run-time illustrate that the algorithm is performing as expected. Another efficient algorithm is proposed for reporting only the top-$k$ subset sums (not subsets) and we have compared our results with an existing solution. These two algorithms were also implemented and the results show that our method is consistently performing better than the existing one. The proposed methodology actually has the rationale of better exploiting the partial order that is inherent in the structure of the problem and we feel it can be used in other similar problems also albeit with a few modifications. Solving the problem for aggregation functions other than summation, and finding other applications of the {\em metadata structure} remain possible directions for future research.

\section*{Acknowledgement}
\label{ack}
We sincerely acknowledge the useful suggestions given by Prof. Subhash Chandra Nandy of ACM Unit, Indian Statistical Institute, Kolkata, India, which helped us to improve the presentation of this manuscript.

\end{document}